\newtheorem{thm}{Theorem}
\newtheorem{lem}{Lemma}
\begin{document}

\title{Data Duplication: A Novel Multi-Purpose Attack Paradigm in Machine Unlearning}

\author{{\rm Dayong Ye$^*$} \hspace{2mm} {\rm Tianqing Zhu$^{\dagger}$} \hspace{2mm} {\rm Jiayang Li$^*$} \hspace{2mm} {\rm Kun Gao$^*$} \\ {\rm Bo Liu$^*$} \hspace{2mm} {\rm Leo Yu Zhang$^{\P}$} \hspace{2mm} {\rm Wanlei Zhou$^{\dagger}$} \hspace{2mm} {\rm Yang Zhang$^{\S}$}\\
$^*$University of Technology Sydney \hspace{2mm} $^{\dagger}$City University of Macau \\ $^{\P}$Griffith University \hspace{2mm} $^{\S}$CISPA Helmholtz Center for Information Security}






\pagestyle{plain} 

\maketitle

\renewcommand{\thefootnote}{}
\footnotetext{Tianqing Zhu is the corresponding author.}

\begin{abstract}
Duplication is a prevalent issue within datasets. Existing research has demonstrated that the presence of duplicated data in training datasets can significantly influence both model performance and data privacy. However, the impact of data duplication on the unlearning process remains largely unexplored. This paper addresses this gap by pioneering a comprehensive investigation into the role of data duplication, not only in standard machine unlearning but also in federated and reinforcement unlearning paradigms. 
Specifically, we propose an adversary who duplicates a subset of the target model's training set and incorporates it into the training set. After training, the adversary requests the model owner to unlearn this duplicated subset, and analyzes the impact on the unlearned model. For example, the adversary can challenge the model owner by revealing that, despite efforts to unlearn it, the influence of the duplicated subset remains in the model.
Moreover, to circumvent detection by de-duplication techniques, we propose three novel near-duplication methods for the adversary, each tailored to a specific unlearning paradigm. We then examine their impacts on the unlearning process when de-duplication techniques are applied. 
Our findings reveal several crucial insights: 1) the gold standard unlearning method, retraining from scratch, fails to effectively conduct unlearning under certain conditions; 2) unlearning duplicated data can lead to significant model degradation in specific scenarios; 
and 3) meticulously crafted duplicates can evade detection by de-duplication methods. The source code is provided at: https://zenodo.org/records/14736535. 
\end{abstract}

\vspace{-2mm}
\section{Introduction}
\vspace{-1mm}

Machine learning requires vast amounts of data to effectively train models for various applications, including image processing \cite{Yang23} and natural language processing \cite{Ouyang22}. The ultimate performance of these models is heavily dependent on the quality of the training data. Presently, training datasets have expanded significantly, ranging from gigabytes to terabytes in size \cite{Xue21ACL}. While large datasets contribute to enhanced model performance, they also introduce a potential risk – the duplication of training data.
Current research has shown that duplicated data can detrimentally impact the overall performance of trained models \cite{Lee22ACL}, introduce heightened privacy risks \cite{Carlini21USENIX}, or result in a plethora of false alarms \cite{Mu22NDSS}. However, one crucial research area that has been largely overlooked in data duplication is machine unlearning. 

The concept of machine unlearning originates from data protection regulations such as the General Data Protection Regulation (GDPR) \cite{GDPR}, which empowers users to request the removal of their data. Under these regulations, model owners are obligated to comply with users' requests, removing revoked data from their datasets and ensuring the elimination of any influence these revoked data may have on the model. Existing machine unlearning research primarily focuses on the exact or approximate unlearning of revoked data and the verification of unlearning results \cite{Xu23}.
However, the extent to which duplicated data affects machine unlearning, such as the verification of unlearning results or the performance of unlearned models, remains unexplored.
This paper initiates the investigation of data duplication in machine unlearning. 

This research is significant due to three reasons.

\begin{itemize}[leftmargin=*]
    \item \textbf{Challenge in Verification.} When unlearning is applied to one duplicated subset, typically at the request of an adversary, it raises significant concerns about the validity of unlearning verification. This is because the other duplicated subset may still remain within the training set.
    


    \item \textbf{Model Collapse.} When the duplicated data include key features essential to the training set, unlearning one duplicated subset may still lead to model collapse, even if the other duplicate remains in the training set. This occurs because the unlearning process can disrupt the model's ability to generalize from those key features.

    \item \textbf{Avoiding De-duplication.} Model owners may implement de-duplication techniques to identify and eliminate duplicate data from training sets. A significant challenge arises in developing strategies to circumvent these detection mechanisms while still posing threats to the unlearning process.

    

\end{itemize}


This paper systematically explores the influence of duplicate data on machine unlearning from three perspectives. Initially, we examine how the presence of completely duplicated data influences both the unlearning outcomes and the subsequent verification results. Then, we carefully design near-duplicate data to investigate whether these synthetically generated near-duplicates have a similar impact to completely duplicated data on machine unlearning. Finally, we introduce and assess the effectiveness of de-duplication techniques to determine if their application can mitigate the impact of both complete and near-duplicate data on the machine unlearning process. 
In summary, we make three main contributions:
\begin{itemize}[leftmargin=*]
\item 
This work represents a significant stride in exploring the impact of data duplication on machine unlearning, establishing a pioneering effort in this emerging research area. 
Our study contributes novel insights in this domain. These insights include the failure of gold standard unlearning approaches, observed model degradation, and the challenges associated with duplication detection.

\item 
To explore the implications of duplicate data in machine unlearning, we introduce a novel set of near-duplication methods. These methods generate near-duplicate data by minimizing the feature distance between duplicate instances and their original counterparts, while simultaneously maximizing the perceptual differences between them. 

\item 
Beyond a focus on standard machine unlearning, our research extends its scope to encompass diverse unlearning domains. This includes federated unlearning and reinforcement unlearning. 
\end{itemize}

\vspace{-4mm}
\section{Preliminary and Threat Model}
\vspace{-2mm}

\subsection{Machine Unlearning} 
Formally, consider the original training set of the model denoted as $\mathcal{D}_{train}$, comprised of two distinct subsets: the set of unlearned data $\mathcal{D}_u$ and the set of retained data $\mathcal{D}_r$, represented as $\mathcal{D}_{train} = \mathcal{D}_u \cup \mathcal{D}_r$. Let $F$ symbolize the model that was trained on this combined dataset, $\mathcal{D}_{train}$. The central objective of machine unlearning is to derive an unlearned model, denoted as $F_u$, by effectively eliminating the influence of $\mathcal{D}_u$. 

\vspace{2mm}
\noindent\textbf{Threat Model.} We presume that the training dataset $\mathcal{D}_{train}$ includes a subset $\mathcal{D}_V$, associated with a victim, targeted by an adversary. 
The adversary is assumed to have the capability to inject a set of samples, $\mathcal{D}_A$, into $\mathcal{D}_{train}$, resulting in $\mathcal{D}_{train}\leftarrow\mathcal{D}_{train}\cup\mathcal{D}_A$. This assumption aligns with conventional data poisoning attacks \cite{Shan22USENIX}. 
For example, an adversary could inject duplicated data into a public dataset and upload it to data repository platforms such as GitHub\footnote{https://github.com} or Hugging Face\footnote{https://huggingface.co}. Any model owner who uses this dataset for training is vulnerable to the adversary’s request to unlearn the duplicated data.

The injected data, $\mathcal{D}_A$, may be identical to $\mathcal{D}_V$ or functionally similar yet perceptually distinct to evade de-duplication techniques. Specifically, for each data point pair $x^A_i\in\mathcal{D}_A$ and $x^V_i\in\mathcal{D}_V$ where $i\in{1,...,m}$ and $m$ is the size of $\mathcal{D}_A$, we require that $d(F(x^A_i), F(x^V_i)) < \delta$, with $d()$ denoting a distance metric like the $l_2$ norm and $\delta$ a pre-set threshold. 

\vspace{2mm}
\noindent\textbf{Adversary's Goal.} The adversary's goal is to challenge the model owner's success in unlearning. To achieve this, the adversary requests the revocation of their duplicated data, $\mathcal{D}_A$. After the unlearning process, the adversary challenges the model owner to verify the success of the unlearning. If the original data, $\mathcal{D}_V$, remains in the model's training set, $\mathcal{D}_{train}$, this verification is highly likely to fail. In such a case, the adversary can claim that the model owner is dishonest.

\vspace{-2mm}
\subsection{Federated Unlearning} 
\vspace{-1mm}
Federated learning (FL) involves a server and a set of clients working collaboratively to iteratively train a global model. Formally, let the number of clients in an FL system be $n$, each with their own local training dataset $\mathcal{D}_i$, where $i\in\{1,...,n\}$. The joint training dataset across all clients is represented by $\mathcal{D}=\bigcup^n_{i=1}\mathcal{D}_i$. 
The objective of the clients is to collaboratively learn a shared global model by solving the optimization problem: $\mathbf{w}^*=\text{argmin}_{\mathbf{w}}f(\mathcal{D},\mathbf{w})$, where $\mathbf{w}^*$ represents the optimal global model and $f(\mathcal{D},\mathbf{w})$ denotes the empirical loss of the model on dataset $\mathcal{D}$. 
The following three steps are iteratively taken by the server and the clients to train the global model.

\begin{itemize}[leftmargin=*]
    \item Step 1: synchronization. The server sends the current global model $\mathbf{w}$ to the clients. 

    \item Step 2: local model training. Once a client receives the global model $\mathbf{w}$, it fine-tunes its local model using their respective local dataset. For instance, consider client $i$. It initializes its local model $\mathbf{w}^i$ with the global model $\mathbf{w}$ and uses stochastic gradient descent to optimize $\mathbf{w}^i$ by solving the optimization problem: $min_{\mathbf{w}^i}f(\mathcal{D}_i,\mathbf{w}^i)$. Then, the client uploads its local model update $\Delta^i=\mathbf{w}^i-\mathbf{w}$ to the server.
    
    \item Step 3: global model updating. The server aggregates the local model updates received from the clients to compute a global model update $\Delta$. A common aggregation rule used is \emph{FedAvg} \cite{McMahan17}, defined as: $\Delta=\frac{1}{n}\sum^n_{i=1}\frac{|\mathcal{D}_i|}{|\mathcal{D}|}\Delta^i$. The global model update $\Delta$ is then used to fine-tune the global model: $\mathbf{w}\leftarrow\mathbf{w}-\alpha\cdot\Delta$, where $\alpha$ represents the global learning rate.
\end{itemize}

In contrast to FL, federated unlearning focuses on the targeted removal of acquired knowledge from the global model $\mathbf{w}$ \cite{Zhao23arxiv,Liu24ACMSurvey}. Depending on the level of granularity for unlearning, current federated unlearning methodologies can be categorized into three distinct groups \cite{Wu22Network}: sample-level unlearning, class-level unlearning, and client-level unlearning. Sample-level unlearning is designed to selectively unlearn individual data samples \cite{Pan23ICLR,Dhasade23}, class-level unlearning seeks to erase an entire class from the trained global model \cite{Wang22WWW}, and client-level unlearning is dedicated to eliminating all locally contributed model updates from a specific client \cite{Liu21IWQoS,Anisa23}. This paper concentrates on client-level unlearning, as it presents unique challenges. 
In comparison, sample-level unlearning is more analogous to scenarios encountered in standard machine unlearning. Class-level unlearning, however, introduces slight differences. For instance, if an adversary duplicates a class of data from a single victim client, unlearning that class has minimal impact on the global model’s performance since other clients also contribute data from the same class. Conversely, if the adversary duplicates a class of data across all clients and requests unlearning of that class, the global model will lose all knowledge related to the class, rendering it unable to accurately classifying data from that class.

\vspace{2mm}
\noindent\textbf{Threat Model.} We, without loss of generality, presume that the $n$-th client is a malicious entity, i.e., an adversary, whose aim is to compromise the global model through unlearning. This adversary is envisioned to actively participate throughout the complete training process of the global model and possesses the ability to craft their local model updates. This assumption is consistent with existing FL model poisoning attacks \cite{Fang20USENIX,She21NDSS}.
In each training round $t$, instead of uploading genuine local model updates, the adversary either replicates the received global model by directly setting $\mathbf{w}^n_t = \mathbf{w}_t$, or modifies the global model to satisfy $d(\mathbf{w}^n_t, \mathbf{w}_t) < \delta$ to bypass de-duplication mechanisms, and then sends $\mathbf{w}^n_t$ to the server. The adversary can easily perform the model upload operation, as it is a standard functionality of the FL system.

\vspace{2mm}
\noindent\textbf{Adversary's Goal.} The adversary's objective is to compromise the entire global model. Assuming the training concludes after $T$ rounds, the adversary executes an attack by revoking all their models, $\mathbf{w}^n_1, \ldots, \mathbf{w}^n_T$. Since these models closely resemble the respective $\mathbf{w}_1, \ldots, \mathbf{w}_T$, unlearning $\mathbf{w}^n_1, \ldots, \mathbf{w}^n_T$ effectively equates to unlearning $\mathbf{w}_1, \ldots, \mathbf{w}_T$. This process can severely damage the entire unlearned global model, $\mathbf{w}_u$, as it results in the loss of all knowledge embedded in $\mathbf{w}_1, \ldots, \mathbf{w}_T$.

\vspace{-2mm}
\subsection{Reinforcement Unlearning} 
\vspace{-1mm}

Formally, a reinforcement learning (RL) environment is commonly formulated as $\mathcal{M}=\langle\mathcal{S},\mathcal{A},\mathcal{T},r\rangle$ \cite{Mnih15Nature}.
Here, $\mathcal{S}$ and $\mathcal{A}$ denote the state and action sets, respectively. $\mathcal{T}$ represents the transition function, and $r$ represents the reward function.
At each time step $t$, the agent, given the current environmental state $s_t\in\mathcal{S}$, selects an action $a_t\in\mathcal{A}$ based on its policy $\pi(s_t,a_t)$.
This action causes a transition in the environment from state $s_t$ to $s_{t+1}$ according to the transition function: $\mathcal{T}(s_{t+1}|s_t,a_t)$.
The agent then receives a reward $r_t(s_t,a_t)$, along with the next state $s_{t+1}$. 
This tuple of information, denoted as $(s_t,a_t,r_t(s_t,a_t),s_{t+1})$, is collected by the agent as an experience sample utilized to update its policy $\pi$.
Typically, the policy $\pi$ is implemented using a Q-function: $Q(s,a)$, estimating the accumulated reward the agent will attain in state $s$ by taking action $a$.
Formally, the Q-function is defined as: 
\begin{equation}\label{eq:Q}
    Q_{\pi}(s,a)=\mathbb{E}_{\pi}[\sum^{\infty}_{i=1}\gamma^i\cdot r(s_i,a_i)|s_i=s,a_i=a],
\end{equation}
where $\gamma$ represents the discount factor. 

In deep reinforcement learning, a neural network is employed to approximate the Q-function, denoted as $Q(s,a;\theta)$, where $\theta$ represents the weights of the neural network.
The neural network takes the state $s$ as input and produces a vector of Q-values as output, with each Q-value corresponding to an action $a$.
To learn the optimal values of $Q(s,a;\theta)$, the weights $\theta$ are updated using a mean squared error loss function $\mathcal{L}(\theta)$. 
\begin{equation}\label{eq:loss}
\begin{aligned}
    \mathcal{L}=\frac{1}{|B|}\sum_{e\in B}[(r(s_t,a_t)+\gamma\max_{a_{t+1}}Q(s_{t+1},a_{t+1};\theta)-Q(s_t,a_t;\theta))^2],
\end{aligned}
\end{equation}
where $e=(s_t,a_t,r(s_t,a_t),s_{t+1})$ is an experience sample showing a state transition, 
and $B$ consists of multiple experience samples used to train the neural network.

During the unlearning process, the objective is to eliminate the influence of a specific environment on the agent, i.e., ``forgetting an environment'', which is equivalent to ``performing deterioratively in that environment'' \cite{Ye25NDSS}. 
Formally, let us consider a set of $n$ learning environments: $(\mathcal{M}_1,\ldots,\mathcal{M}_n)$. Each $\mathcal{M}_i$ has the same state and action spaces but differs in state transition and reward functions. 
Consider the target environment to be unlearned as $\mathcal{M}_u = \langle \mathcal{S}_u, \mathcal{A}_u, \mathcal{T}_u, r \rangle$, denoted as the `unlearning environment'. The set of remaining environments, denoted as $(\mathcal{M}_1, \ldots, \mathcal{M}_{u-1}, \mathcal{M}_{u+1}, \ldots, \mathcal{M}_n)$, will be referred to as the `retaining environments'.  
Given a learned policy $\pi$, the goal of unlearning is to update the policy $\pi$ to $\pi'$ such that the accumulated reward obtained in $\mathcal{M}_u$ is minimized: 
\begin{equation}\label{eq:aim}
    \min_{\pi'}||Q_{\pi'}(s)||_{\infty},
\end{equation}
where $s\in\mathcal{S}_u$, 
while the accumulated reward received in the retaining environments remains the same: 
\begin{equation}\label{eq:constraint}
    \min_{\pi'}||Q_{\pi'}(s)-Q_{\pi}(s)||_{\infty}, 
\end{equation} 
where $s\notin\mathcal{S}_u$.

\vspace{2mm}
\noindent\textbf{Threat Model.} We assume that an adversary can create an environment, $\mathcal{M}_A$, that is either identical or very similar to a given victim environment, $\mathcal{M}_V$, with a small difference $d(\mathcal{M}_A, \mathcal{M}_V) < \delta$ to evade de-duplication techniques. This assumption is practical, particularly in cases where the adversary is an internal attacker with regular access to $\mathcal{M}_V$. 
For example, an adversary can upload duplicated environments to shared repositories, such as simulation platform hubs like OpenAI Gym\footnote{https://github.com/openai/gym}. If the agent's owner uses these simulation platforms for training, they are exposed to the adversary’s subsequent request to unlearn the duplicated environments.

\vspace{2mm}
\noindent\textbf{Adversary's Goal.} The adversary's goal is to degrade the agent's performance in the victim environment, $\mathcal{M}_V$. To accomplish this, the adversary requests the model owner to erase their environment, $\mathcal{M}_A$. Due to the similarity between $\mathcal{M}_A$ and $\mathcal{M}_V$, unlearning $\mathcal{M}_A$ can inadvertently lead to the unlearning of $\mathcal{M}_V$. Consequently, this may result in a decline in the performance of the unlearned policy, $\pi'$, within $\mathcal{M}_V$.




\vspace{-2mm}
\section{Methodology}
\vspace{-1mm}
The central challenge when developing near-duplication methods lies in the creation of ``similar'' data, model updates, or environments, as the use of identical elements can be easily detected through de-duplication methods. Therefore, for each of the three learning settings, we will design a novel near-duplication method to effectively compromise the unlearning process while evading established de-duplication techniques. 


    

\vspace{-2mm}
\subsection{Duplication in Machine Unlearning}
\vspace{-1mm}
\noindent\textbf{Problem Statement.} Given a target model $F$, its training dataset $\mathcal{D}_{train}$, and a victim subset $\mathcal{D}_V$, 
where $\mathcal{D}_V\subset\mathcal{D}_{train}$, the adversary constructs a duplicated dataset $\mathcal{D}_A$ and inject it into $\mathcal{D}_{train}$, i.e., $\mathcal{D}_{train}\leftarrow\mathcal{D}_{train}\cup\mathcal{D}_A$. This duplicated dataset must match the size of $\mathcal{D}_V$ and exhibit functional similarity to $\mathcal{D}_V$.
Formally, for each pair of data points $x^A_i\in\mathcal{D}_A$ and $x^V_i\in\mathcal{D}_V$, the adversary aims to optimize two distinct losses: a utility loss (Eq. \ref{eq:utilityloss}) and a perceptual loss (Eq. \ref{eq:perceptualloss}).

\begin{equation}\label{eq:utilityloss}
    \mathcal{L}_u=min||E(x^A_i)-E(x^V_i)||_1,
\end{equation}
where $E$ is a feature extractor.
\begin{equation}\label{eq:perceptualloss}
    \mathcal{L}_p=max||x^A_i-x^V_i||_1.
\end{equation}

Here, Eq. \ref{eq:perceptualloss} is experimentally designed to balance overall performance and computational efficiency in creating near-duplicates. We have also evaluated alternative perceptual metrics such as SSIM \cite{Wang04TIP}. However, SSIM presented challenges in achieving convergence during the creation process.

\vspace{2mm}
\noindent\textbf{Overview.} The proposed duplication method against standard machine unlearning comprises three distinct steps.
First, the adversary initiates the process by generating a dataset $\mathcal{D}_{dup}$ through the direct duplication of $\mathcal{D}_V$, effectively rendering $\mathcal{D}_{dup} = \mathcal{D}_V$.
Second, the adversary employs $\mathcal{D}_{dup}$ to train an autoencoder $AE$, optimizing for the losses defined in both Eq. \ref{eq:utilityloss} and \ref{eq:perceptualloss}.
Lastly, the adversary applies the trained $AE$ to each sample in $\mathcal{D}_{dup}$ to yield $\mathcal{D}_A$. 
To carry out an attack, the adversary requests the server to unlearn the data $\mathcal{D}_A$.

\vspace{2mm}
\noindent\textbf{Details of the Method.} The crucial phase in the proposed method is the second step: training an autoencoder $AE$, shown in Figure \ref{fig:autoencoder}. 
This autoencoder takes a sample $x^V$ as input and produces a new sample $x^A$. 
It is expected that $||E(x^A) - E(x^V)||_1 < \delta_u$, while $||x^A - x^V||_1 > \delta_p$, where both $\delta_u$ and $\delta_p$ are pre-defined thresholds.

\begin{figure}[ht]
\vspace{-2mm}
\centering
	\includegraphics[scale=0.32]{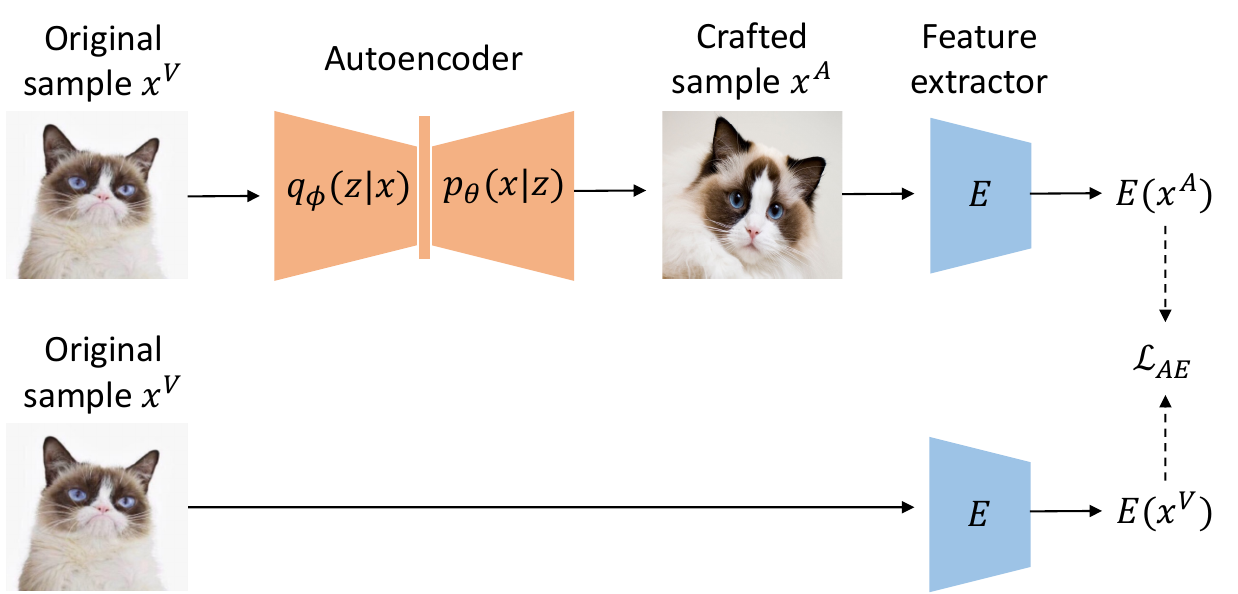}
 \vspace{-2mm}
	\caption{Training of the autoencoder.} 
	\label{fig:autoencoder}
 \vspace{-2mm}
\end{figure}

To train the autoencoder, the adversary utilizes $\mathcal{D}_{dup}$ as the training dataset, and the loss function is defined as:

\begin{equation}\label{eq:AEloss}
\mathcal{L}_{AE} = \min\left(||E(x^A) - E(x^V)||_1 - \lambda||x^A - x^V||_1\right),
\end{equation}
where $\lambda$ is used to balance the two terms. For example, in the case of an image classifier, the adversary can employ a publicly pretrained image classifier, such as MobileNetV2 \cite{Sandler18CVPR}, as a feature extractor by taking the output of its intermediate layers.
This approach aims to minimize the discrepancy between the extracted features of the original sample $x^V$ and the crafted sample $x^A$ while ensuring that the perturbation between $x^V$ and $x^A$ remains as large as possible.

\vspace{2mm}
\noindent\textbf{Analysis of the Method.} In this method, we trained an autoencoder to generate samples $\mathcal{D}_A$ that exhibit functional similarity to those in $\mathcal{D}_V$. The autoencoder comprises two integral components: an encoder and a decoder. The encoder provides the decoder with an approximation of its posterior over latent variables. Conversely, the decoder serves as a framework for the encoder to learn meaningful representations of the data. 

Formally, let $q_{\phi}(z|x)$ represent the encoder, and $p_{\theta}(x|z)$ represent the decoder, where $x$ and $z$ denote a sample and a latent variable, respectively. The parameters of the encoder and decoder are denoted by $\phi$ and $\theta$, respectively.
In essence, the autoencoder learns a stochastic mapping from the observed sample space $x$ to the latent space $z$, while the decoder learns an inverse mapping from the latent space to the sample space. Following Bayesian rules, the decoder $p_{\theta}(x|z)$ can be expressed as $p_{\theta}(x|z) = \frac{p_{\theta}(z|x)p_{\theta}(x)}{p_{\theta}(z)}$. Since $p_{\theta}(z|x)$ is the inverse of $p_{\theta}(x|z)$, it can be approximated by $q_{\phi}(z|x)$ \cite{Kingma19}. Consequently, we derive $p_{\theta}(x|z) = \frac{q_{\phi}(z|x)p_{\theta}(x)}{p_{\theta}(z)}$.

Through rearrangement and utilizing log-likelihood, we obtain the expression $log[p_{\theta}(x|z)] - log[p_{\theta}(x)] = log[q_{\phi}(z|x)] - log[p_{\theta}(z)]$. Notably, $log[p_{\theta}(x|z)] - log[p_{\theta}(x)]$ signifies the difference between the input and output of the autoencoder, identical to the disparity between the learned latent variable distribution by the encoder and the true latent variable distribution, i.e., $log[q_{\phi}(z|x)] - log[p_{\theta}(z)]$. As the true latent variable distribution $p_{\theta}(z)$ remains constant, controlling the difference between the input and output of the autoencoder necessitates appropriate training of the encoder.

As expressed in Eq. \ref{eq:AEloss}, the first term $||E(x^A) - E(x^V)||_1$ uses a well-trained feature extractor $E$ to guide the encoder's training toward the provided feature extractor. Also, as the objective is to craft perceptually distinct samples, the second term in Eq. \ref{eq:AEloss}, $||x^A - x^V||_1$, is incorporated to guide the decoder's training toward generating perceptually diverse samples. This dual-term loss function ensures the convergence of the autoencoder while balancing functional similarity and perceptual dissimilarity in the generated samples.

\subsection{Duplication in Federated Unlearning}
\noindent\textbf{Problem Statement.} In the context where the server generates a global model denoted as $\mathbf{w}$ and distributes it to all clients, the aim of the adversary, i.e., the malicious client, is to create a local model denoted as $\mathbf{w}^A$. This local model should have the same structure as $\mathbf{w}$ and exhibit functional similarity. However, the parameter values of $\mathbf{w}^A$ should differ from those of $\mathbf{w}$. Formally, given the adversary's local data $\mathcal{D}_A$, the goal is to optimize two separate losses: the similarity loss (Eq. \ref{eq:similarityloss}) and the model-distance loss (Eq. \ref{eq:modelloss}).

\begin{equation}\label{eq:similarityloss}
    \mathcal{L}_s=min||\mathbf{w}(x)-\mathbf{w}^A(x)||_1,
\end{equation}
where $\mathbf{w}()$ and $\mathbf{w}^A()$ represent the logits of the global model and the adversary's local model, respectively, while $x\in\mathcal{D}_A$ denotes a local sample from the adversary. 
\begin{equation}\label{eq:modelloss}
    \mathcal{L}_m=max||\mathbf{w}-\mathbf{w}^A||_1.
\end{equation}

Here, Eq. \ref{eq:modelloss} employs the $L_1$-norm as an intuitive method for comparing model parameters during the creation of near-duplicate models. This approach ensures that the generation process is efficient while preserving the fidelity of the duplicates by promoting small and consistent changes across the parameters, thereby maintaining the model's overall structure and functionality.

\vspace{2mm}
\noindent\textbf{Overview.} The duplication method consists of three steps. Firstly, upon receiving the global model $\mathbf{w}$, the adversary initializes the local model $\mathbf{w}^A$ by adopting the same structure as $\mathbf{w}$ and randomly initializing the parameter values of $\mathbf{w}^A$. Secondly, the adversary utilizes the local data $\mathcal{D}_A$ to train $\mathbf{w}^A$, optimizing the losses defined in both Eq. \ref{eq:similarityloss} and \ref{eq:modelloss}. Finally, the adversary transmits the trained local model $\mathbf{w}^A$ to the server. To execute an attack, the adversary then requests the server to unlearn their local model $\mathbf{w}^A$.

\vspace{2mm}
\noindent\textbf{Details of the Method.} The pivotal phase of this method is the second step: local model training. This training leverages the knowledge distillation technique, a process in which a student model is trained under the guidance of a teacher model \cite{Gou21}. In our scenario, the global model serves as the teacher, while the adversary's local model represents the student. However, in contrast to conventional knowledge distillation objectives, the adversary's intention is to train the local model to emulate the global model rather than achieving high performance. 
To achieve the adversary's objective, a direct approach involves defining a loss function that combines Eq. \ref{eq:similarityloss} and \ref{eq:modelloss}, namely $\min\left(||\mathbf{w}(x) - \mathbf{w}^A(x)||_1 - ||\mathbf{w}-\mathbf{w}^A||_1\right)$. However, Eq. \ref{eq:similarityloss} solely compares the outputs of the two models without accounting for their internal representations. Thus, an extension of Eq. \ref{eq:similarityloss} is necessary to also transfer the internal representations of the global model to the adversary's local model.

\begin{figure}[ht]
\vspace{-1mm}
\centering
	\includegraphics[scale=0.3]{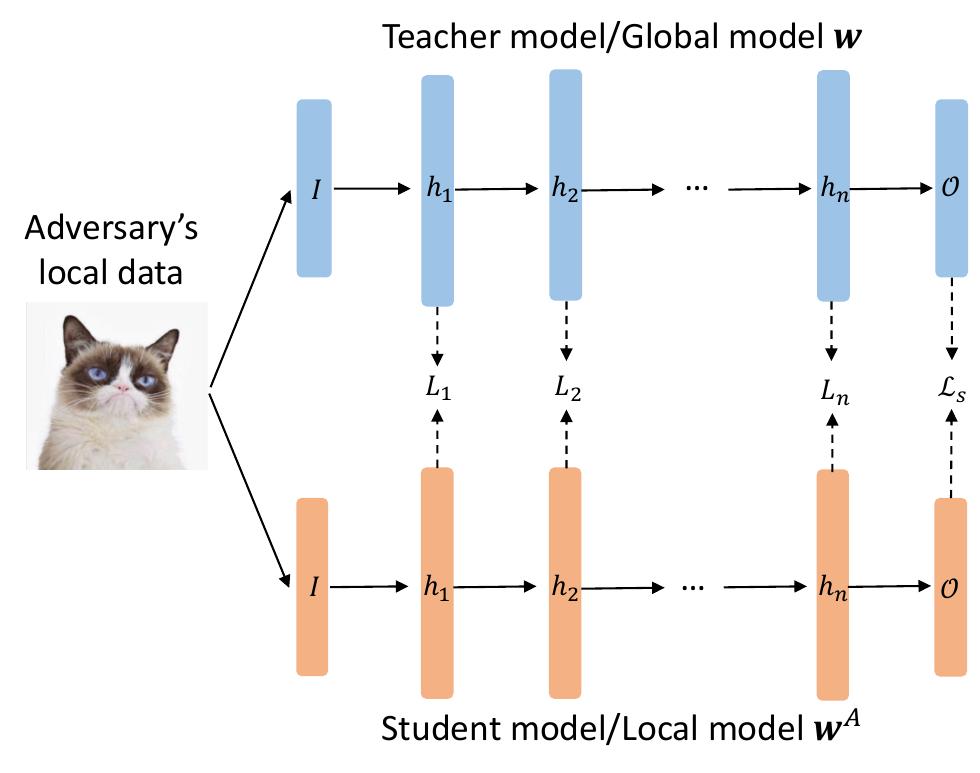}
 \vspace{-1mm}
	\caption{Training of the adversary's local model.} 
	\label{fig:KD}
 \vspace{-1mm}
\end{figure}

As depicted in Figure \ref{fig:KD}, for the student to mimic the teacher internally, we consider two scenarios. The first scenario is the internal distillation of all layers, where every layer of the student is fine-tuned to align with the corresponding layer in the teacher. The second scenario is the internal distillation of selected layers, where a selected subset of layers in the student is optimized to closely align with their counterparts in the teacher model. 


In our specific problem, we opt for the approach of internal distillation of selected layers. This deliberate choice stems from the recognition that distilling knowledge from all layers might result in the student model closely mirroring the teacher model. While this alignment is advantageous for knowledge transfer, it simultaneously heightens the risk of detection by de-duplication techniques. 
By distilling knowledge from specific layers, we guide the student model to internalize the most relevant and informative aspects of the teacher's knowledge while avoiding an overly conspicuous resemblance. Specifically, the loss function used by the adversary to train their local model, i.e., the student model, is defined as:

\begin{equation}\label{eq:FLloss}
\begin{aligned}
    &\mathcal{L}_{KD}=min(\lambda_1||\mathbf{w}_{(1)}(x)-\mathbf{w}_{(1)}^A(x)||_1+...+\\
    &\lambda_m||\mathbf{w}_{(m)}(x)-\mathbf{w}_{(m)}^A(x)||_1+||\mathbf{w}(x)-\mathbf{w}^A(x)||_1-\lambda||\mathbf{w}-\mathbf{w}^A||_1),
\end{aligned}
\end{equation}
where $\mathbf{w}_{(i)}()$ is the output of the $i$-th hidden layer of the model $\mathbf{w}$, $m$ is the number of hidden layers selected by the adversary, and $\lambda$, $\lambda_1,...,\lambda_m$ are used to balance these learning objectives. 

\vspace{2mm}
\noindent\textbf{Analysis of the Method.}  
The conventional knowledge distillation loss incorporates both soft and hard labels in a combined form: $\mathcal{L}=\rho L(\mathbf{y}_s,\mathbf{y}_t)+(1-\rho)L(\mathbf{y}_s,\mathbf{y}_g)$, where $\mathbf{y}_s$ and $\mathbf{y}_t$ represent the soft labels of the student and teacher, respectively, $\mathbf{y}_g$ denotes the ground truth label, and $\rho$ is referred to as the soft ratio. The efficacy of this loss function in ensuring the convergence of the student has been established in \cite{Aguilar20,Ji20}. 

In contrast to the conventional knowledge distillation loss, as expressed in Eq. \ref{eq:FLloss}, our approach differs by excluding the hard-label component $L(\mathbf{y}_s, \mathbf{y}_g)$. Our focus is on emulating the behavior of the teacher rather than enhancing the student's standalone performance. Instead, our strategy involves multiple objective losses pertaining to the internal representations of the teacher.
Given the proven convergence of $\mathcal{L}$, if our objective is to drive $L(\mathbf{y}_s, \mathbf{y}_t)$ towards zero, we only need to ensure that $L(\mathbf{y}_s, \mathbf{y}_g)$ approaches zero. This alignment is easily achievable in our context by inducing the student model to overfit to the adversary's local data. Notably, overfitting poses no concern in our setting, as our primary aim lies in the emulation of the teacher's behavior rather than the student model's generalization ability. Consequently, by inducing overfitting, we can effectively drive $L(\mathbf{y}_s, \mathbf{y}_t)$ towards zero.

Returning to Eq. \ref{eq:FLloss}, where $L(\mathbf{y}_s, \mathbf{y}_t)$ is equivalent to $||\mathbf{w}(x)-\mathbf{w}^A(x)||_1$, achieving convergence in $\mathcal{L}_{KD}$ necessitates balancing the values of the remaining terms: $||\mathbf{w}_{(1)}(x)-\mathbf{w}_{(1)}^A(x)||_1,...,||\mathbf{w}_{(m)}(x)-\mathbf{w}_{(m)}^A(x)||_1$ and $||\mathbf{w}-\mathbf{w}^A||_1$. A direct approach would be to induce all these terms to converge to $0$. However, such an approach implies $||\mathbf{w}-\mathbf{w}^A||_1\rightarrow 0$, which contradicts our primary goal. The critical step is judiciously selecting the value of $m$. A small $m$ may lead the loss to converge to sub-optimality, while a large $m$ can induce oscillations. 
We empirically set the value of $m$ to $2$ with one representing the output of the convolutional blocks and the other representing the output of the classification layer. 



\vspace{-2mm}
\subsection{Duplication in Reinforcement Unlearning}
\vspace{-1mm}

\noindent\textbf{Problem Statement.} In this context, when considering a victim environment $\mathcal{M}_V$, the adversary's objective is to construct a comparable environment, denoted as $\mathcal{M}_A$. This newly created environment, $\mathcal{M}_A$, should exhibit functional similarity to $\mathcal{M}_V$ but possess distinct parameter values. Formally, to establish functional equivalence, the agent operating under a learned policy $\pi$ should be able to achieve similar cumulative rewards in both $\mathcal{M}_A$ and $\mathcal{M}_V$, i.e.,
\begin{equation}\label{eq:reward}
    \mathcal{L}_r=min|Q_{\pi}(s,a|_{(s,a)\sim\mathcal{M}_V})-Q_{\pi}(s,a|_{(s,a)\sim\mathcal{M}_A})|.
\end{equation}
In pursuit of distinct parameter values, the adversary endeavors to alter $\mathcal{M}_V$ to $\mathcal{M}_A$. 
\begin{equation}\label{eq:transition}
    \mathcal{L}_t=max[d(\mathcal{M}_V,\mathcal{M}_A)].
\end{equation}

\vspace{2mm}
\noindent\textbf{Overview.} The duplication method targeting reinforcement unlearning is a strategic process that unfolds in three distinct steps. In the initial step, the adversary straightforwardly duplicates the victim environment $\mathcal{M}_V$. Subsequently, the adversary leverages $\mathcal{M}_V$ as a training ground to develop an environment generator. The training process of this generator is guided by the losses defined in both Eq. \ref{eq:reward} and \ref{eq:transition}. Lastly, the adversary deploys the trained generator to create a comparable environment $\mathcal{M}_A$. To execute the attack, the adversary simply instructs the agent to unlearn the generated environment $\mathcal{M}_A$.

\begin{figure}[ht]
\vspace{-2mm}
\centering
	\includegraphics[scale=0.35]{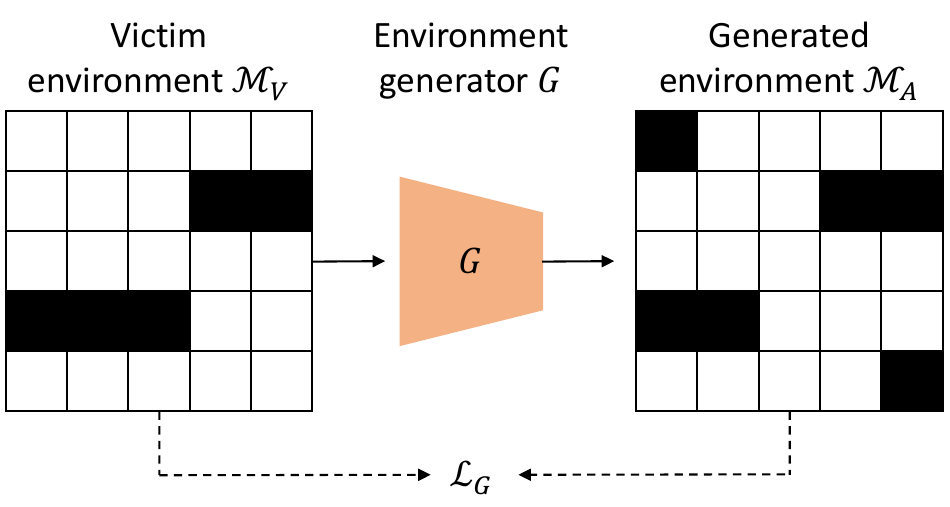}
 \vspace{-2mm}
	\caption{Training of the adversary's environment generator.} 
	\label{fig:EnvGen}
 \vspace{-2mm}
\end{figure}

\vspace{2mm}
\noindent\textbf{Details of the Method.} The pivotal phase of this method is the second step, which involves training an environment generator as illustrated in Figure \ref{fig:EnvGen}. The objective is to ensure that the resulting generated environment $\mathcal{M}_A$ possesses identical state and action spaces as $\mathcal{M}_V$, while eliciting a comparable cumulative reward for the agent under a learned policy $\pi$, namely $min(|\mathcal{L}_r-\mathcal{L}_t|)$. The differentiating factor lies solely in the state transition function, denoted as $\mathcal{T}_V$ for the victim environment and $\mathcal{T}_A$ for the generated environment. Technically, the adversary also has the capability to modify the state and action spaces of $\mathcal{M}_V$. Nevertheless, such modifications carry the potential to exert a significant impact on the overall environment.
This stems from the intricate relationship between the state and action spaces and the dynamics of RL environments. A change in the state space can result in altered observations available to the agent, influencing its perception of the environment. Similarly, a modification in the action space can redefine the set of feasible actions the agent can take, reshaping the decision-making process. Thus, both alterations can lead to a substantial transformation in the environment's behavior.

To guarantee that the generated environment $\mathcal{M}_A$ yields a similar cumulative reward for the agent as $\mathcal{M}_V$, the loss defined in Eq. \ref{eq:reward} is evaluated by using the learned policy $\pi$ to explore both $\mathcal{M}_A$ and $\mathcal{M}_V$, and subsequently comparing the acquired rewards. Formally, Eq. \ref{eq:reward} is implemented as:
\begin{equation}\label{eq:rewardR}
    \mathcal{L}_r=min|\sum^{T}_{i=1}r(s_i,a_i)_{(s_i,a_i)\sim\mathcal{M}_V}-\sum^{T}_{i=1}r(s_i,a_i)_{(s_i,a_i)\sim\mathcal{M}_A}|,
\end{equation}
where $T$ denotes the predefined number of steps taken to explore each of the two environments.
Concurrently, to achieve a distinct state transition function $\mathcal{T}_A$ within $\mathcal{M}_A$, the loss function specified in Eq. \ref{eq:transition} is implemented as: 
\begin{equation}\label{eq:transitionKL}
    \mathcal{L}_t=max[KL(\mathcal{T}_V(\cdot|s,a)||\mathcal{T}_A(\cdot|s,a))],
\end{equation}
where $KL(\cdot||\cdot)$ denotes the KL-divergence. 
We utilize KL-divergence as it is well-suited for generating near-duplicate transition functions represented as probability distributions. This metric guides the creation process by capturing subtle differences in probability distributions to ensure that the generated near-duplicates align with the desired properties.
The training loss of the environment generator is defined as: 
\begin{equation}\label{eq:generator}
    \mathcal{L}_G=min(|\mathcal{L}_r-\mathcal{L}_t|).
\end{equation}
Finally, attaining identical state and action spaces between $\mathcal{M}_A$ and $\mathcal{M}_V$ can be accomplished by properly defining the input and output of the environment generator.

\vspace{2mm}
\noindent\textbf{Analysis of the Method.} As our objective is to alter the state transition function of an environment, our analysis predominantly focuses on this function. 

Let $\mathcal{T}^{\pi}(s'|s)=\int_{\mathcal{A}}\pi(a|s)\mathcal{T}(s'|s,a){\rm d}a$, and the expected cumulative reward under policy $\pi$ be \cite{Metelli18}:
\begin{equation}\label{eq:cumulativeReward}
    J^{\mathcal{T},\pi}_{\mu}=\frac{1}{1-\gamma}\int_{\mathcal{S}}d^{\mathcal{T},\pi}_{\mu}(s)\int_{\mathcal{A}}\pi(a|s)r(s,a){\rm d}a{\rm d}s,
\end{equation}
where $d^{\mathcal{T},\pi}_{\mu}$ is the $\gamma$-discounted state distribution, recursively defined as: 
\begin{equation}
    d^{\mathcal{T},\pi}_{\mu}(s)=(1-\gamma)\mu(s)+\gamma\int_{\mathcal{S}}d^{\mathcal{T},\pi}_{\mu}(s')\mathcal{T}^{\pi}(s'|s){\rm d}s',
\end{equation}
and $\mu(s)$ is the distribution of the initial state. 
Then, we obtain:
\begin{lem}\label{lem:state distribution}
    Let $\mathcal{T}$ and $\mathcal{T}'$ be two distinct state transition functions. The $l_1$ norm of the difference between the $\gamma$-discounted state distributions can be upper bounded as:
    \begin{equation}\nonumber
        ||d^{\mathcal{T}',\pi}_{\mu}-d^{\mathcal{T},\pi}_{\mu}||_1\leq\frac{\gamma}{1-\gamma}\mathbb{E}_{s\sim d^{\mathcal{T},\pi}_{\mu}}||\mathcal{T}'^{\pi}(\cdot|s)-\mathcal{T}^{\pi}(\cdot|s)||_1.
    \end{equation}
\end{lem}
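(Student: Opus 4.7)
The plan is to leverage the fixed-point recursion that defines $d^{\mathcal{T},\pi}_{\mu}$ and reduce the problem to the standard fact that a Markov kernel is non-expansive in $l_1$. Write $d=d^{\mathcal{T},\pi}_{\mu}$, $d'=d^{\mathcal{T}',\pi}_{\mu}$, $P=\mathcal{T}^{\pi}$ and $P'=\mathcal{T}'^{\pi}$. Subtracting the two recursions cancels the $(1-\gamma)\mu$ term and leaves a difference of two integrals that mix $(d,P)$ with $(d',P')$. The first step is therefore to split this mixed difference cleanly by adding and subtracting a suitable ``cross'' term.

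\textbf{Main decomposition.} The key algebraic step is to insert $d(s')P'(\cdot|s')$ (rather than $d'(s')P(\cdot|s')$) as the cross term, because the target bound places the expectation under $d$, not $d'$. Concretely, I would write
\begin{equation}\nonumber
d'(s)-d(s)=\gamma\!\int\!\bigl[d'(s')-d(s')\bigr]P'(s|s')\,\mathrm{d}s'+\gamma\!\int\! d(s')\bigl[P'(s|s')-P(s|s')\bigr]\mathrm{d}s'.
\end{equation}
Taking the $l_1$ norm in $s$ and using the triangle inequality yields two contributions. For the first, since $P'(\cdot|s')$ is a probability density, $\int P'(s|s')\,\mathrm{d}s=1$, hence the standard computation gives $\|\int[d'(s')-d(s')]P'(\cdot|s')\mathrm{d}s'\|_1\leq \|d'-d\|_1$. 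For the second, swapping the order of integration and pulling the absolute value inside gives $\gamma\,\mathbb{E}_{s'\sim d}\|P'(\cdot|s')-P(\cdot|s')\|_1$.

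\textbf{Closing the recursion.} Combining the two estimates produces the self-bounding inequality
\begin{equation}\nonumber
\|d'-d\|_1\leq\gamma\|d'-d\|_1+\gamma\,\mathbb{E}_{s\sim d}\|\mathcal{T}'^{\pi}(\cdot|s)-\mathcal{T}^{\pi}(\cdot|s)\|_1,
\end{equation}
and rearranging (valid because $\gamma<1$) gives the claimed factor $\gamma/(1-\gamma)$. A brief justification that $\|d'-d\|_1$ is finite (both are sub-probability densities integrating to $1$, so their difference has $l_1$ norm at most $2$) will be useful to make the rearrangement rigorous.

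\textbf{Anticipated obstacle.} The only subtle point is the asymmetric choice of the cross term: a naive decomposition using $d'(s')P(\cdot|s')$ instead would yield an expectation under $d'$ rather than under $d$, which does not match the statement. Beyond that, the argument is essentially a one-line Fubini-plus-triangle computation once the decomposition is chosen correctly; the contraction of Markov kernels does the rest of the work.
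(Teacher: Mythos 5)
Your proof is correct, and it follows the same overall skeleton as the paper's (subtract the two fixed-point recursions, insert a cross term, apply the triangle inequality and Fubini, use that the kernel integrates to one, and rearrange the resulting self-bounding inequality). The one genuine difference is the choice of cross term, and it is a difference that matters. The paper inserts $d^{\mathcal{T}',\pi}_{\mu}(s')\mathcal{T}^{\pi}(s|s')$, which decomposes the difference as $d'(P'-P)+(d'-d)P$; after Fubini this produces the term $\gamma\,\mathbb{E}_{s\sim d^{\mathcal{T}',\pi}_{\mu}}\|\mathcal{T}'^{\pi}(\cdot|s)-\mathcal{T}^{\pi}(\cdot|s)\|_1$, i.e.\ an expectation under $d^{\mathcal{T}',\pi}_{\mu}$, which the paper then writes as an expectation under $d^{\mathcal{T},\pi}_{\mu}$ in its final line without justification --- exactly the mismatch you flag as the ``anticipated obstacle.'' Your decomposition $(d'-d)P'+d(P'-P)$ places the perturbed kernel difference against $d$ rather than $d'$, so the expectation lands under $d^{\mathcal{T},\pi}_{\mu}$ as the lemma states, and no silent swap is needed. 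Your additional remark that $\|d'-d\|_1\leq 2$ is finite, which makes the rearrangement $(1-\gamma)\|d'-d\|_1\leq\gamma\,\mathbb{E}_{s\sim d}\|\cdot\|_1$ legitimate, is also a detail the paper omits. In short: same method, but your version of the key algebraic step is the one that actually proves the lemma as stated.
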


Based on Lemma \ref{lem:state distribution}, we can have:
\begin{thm}\label{thm:reward}
The difference in the cumulative rewards between two state transition functions, $\mathcal{T}'$ and $\mathcal{T}$, can be upper bounded as:
    \begin{equation}\nonumber
    \begin{aligned}
        &J^{\mathcal{T}',\pi}_{\mu}-J^{\mathcal{T},\pi}_{\mu}\leq\\
        &\frac{\gamma}{(1-\gamma)^2}\mathbb{E}_{s\sim d^{\mathcal{T},\pi}_{\mu}}||\mathcal{T}'^{\pi}(\cdot|s)-\mathcal{T}^{\pi}(\cdot|s)||_1\max_s\mathbb{E}_{a\sim\mathcal{A}}r(s,a).
    \end{aligned}
    \end{equation}
\end{thm}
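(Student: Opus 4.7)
The plan is to expand the difference $J^{\mathcal{T}',\pi}_{\mu}-J^{\mathcal{T},\pi}_{\mu}$ using the closed-form expression given in Eq. \ref{eq:cumulativeReward}, and to observe that the only quantity that changes between $J^{\mathcal{T}',\pi}_{\mu}$ and $J^{\mathcal{T},\pi}_{\mu}$ is the $\gamma$-discounted state distribution, since $\pi$ and $r$ are held fixed. This lets me pull out a single integral over $\mathcal{S}$ of the form
\begin{equation}\nonumber
J^{\mathcal{T}',\pi}_{\mu}-J^{\mathcal{T},\pi}_{\mu}=\frac{1}{1-\gamma}\int_{\mathcal{S}}\bigl(d^{\mathcal{T}',\pi}_{\mu}(s)-d^{\mathcal{T},\pi}_{\mu}(s)\bigr)R^{\pi}(s)\,{\rm d}s,
\end{equation}
where $R^{\pi}(s)=\int_{\mathcal{A}}\pi(a|s)r(s,a)\,{\rm d}a$ is the policy-averaged instantaneous reward at state $s$.

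Next, I would apply Hölder's inequality to decouple the state distribution discrepancy from the reward term, obtaining the upper bound
\begin{equation}\nonumber
J^{\mathcal{T}',\pi}_{\mu}-J^{\mathcal{T},\pi}_{\mu}\leq\frac{1}{1-\gamma}\,\bigl\|d^{\mathcal{T}',\pi}_{\mu}-d^{\mathcal{T},\pi}_{\mu}\bigr\|_1\cdot\max_s R^{\pi}(s).
\end{equation}
The $\max_s R^{\pi}(s)$ term matches the $\max_s\mathbb{E}_{a\sim\mathcal{A}}r(s,a)$ appearing in the theorem statement (with the $\mathbb{E}$ understood as the policy-induced expectation over actions).

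Finally, I would substitute the bound from Lemma \ref{lem:state distribution} for $\|d^{\mathcal{T}',\pi}_{\mu}-d^{\mathcal{T},\pi}_{\mu}\|_1$, which contributes a factor of $\gamma/(1-\gamma)$. Combining the two factors of $1/(1-\gamma)$ yields the claimed $\gamma/(1-\gamma)^2$ prefactor, along with the expected $l_1$ distance $\mathbb{E}_{s\sim d^{\mathcal{T},\pi}_{\mu}}\|\mathcal{T}'^{\pi}(\cdot|s)-\mathcal{T}^{\pi}(\cdot|s)\|_1$ inherited from the lemma, completing the chain of inequalities.

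The main obstacle is not the algebra itself, which is elementary once the decomposition is written down, but rather the justification of the Hölder step: one must be careful that $R^{\pi}(s)$ is bounded and that the difference of two (normalized) state distributions is being treated consistently as a signed measure whose total variation is exactly $\|d^{\mathcal{T}',\pi}_{\mu}-d^{\mathcal{T},\pi}_{\mu}\|_1$. Provided rewards are assumed bounded (so that $\max_s R^{\pi}(s)$ is finite), the bound follows directly and the factors of $(1-\gamma)$ compose cleanly.
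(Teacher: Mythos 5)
Your proposal is correct and follows essentially the same route as the paper's own proof: expand the difference via Eq.~\ref{eq:cumulativeReward} so that only the discounted state distributions differ, apply a H\"older-type bound to extract $\|d^{\mathcal{T}',\pi}_{\mu}-d^{\mathcal{T},\pi}_{\mu}\|_1\cdot\max_s\mathbb{E}_{a\sim\mathcal{A}}r(s,a)$, and then invoke Lemma~\ref{lem:state distribution} to obtain the $\gamma/(1-\gamma)^2$ prefactor. No substantive differences to note.
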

Theorem \ref{thm:reward} indicates that the difference in rewards between two distinct state transition functions is upper-bounded by the difference between these two transition functions. Therefore, achieving a balance between $\mathcal{L}_r$ and $\mathcal{L}_t$ is crucial. 
In essence, a delicate equilibrium is sought, typically favoring a minimized $\mathcal{L}_r$ and a maximized $\mathcal{L}_t$ to fulfill the adversary's objectives. However, overly aggressive maximization of $\mathcal{L}_t$ may result in a substantial deviation from the original cumulative reward, potentially making the method more conspicuous and prone to detection mechanisms. 
The proofs of both Lemma \ref{lem:state distribution} and Theorem \ref{thm:reward} are provided in the appendix.

\section{Experiments}
\vspace{-2mm}
Our experimental focus is on machine unlearning and federated unlearning. As reinforcement unlearning differs significantly in methodology, its results are presented separately.

\vspace{-2mm}
\subsection{Experimental Setup}
\noindent\textbf{Evaluation Tasks.} Duplication can be executed in either a complete or a similar manner, while de-duplication techniques can be either applied or not applied. Here, complete duplication entails fully copying the unlearning data, models, or environments, while similar duplication involves copying only the features of those unlearning entities, as shown in our proposed methods. Thus, we set up four evaluation tasks. 
\begin{itemize}[leftmargin=*]
    \item \textit{Complete duplication without de-duplication.} This task evaluates the unlearning outcomes when the unlearning entities are completely copied without applying de-duplication techniques. It serves as a baseline for our evaluation.

    \item \textit{Similar duplication without de-duplication.} This task evaluates the performance of our proposed methods without applying any de-duplication techniques. It serves to demonstrate the upper bound performance of our methods.  
    
    \item \textit{Complete duplication with de-duplication.} This task evaluates the unlearning results under complete duplication when the de-duplication techniques are applied. It serves as the lower bound performance in our evaluation.

    \item \textit{Similar duplication with de-duplication.} This task is a crucial evaluation, assessing the performance of our proposed methods when confronted with adopted de-duplication techniques. It aims to determine whether our methods can effectively bypass these de-duplication techniques.
\end{itemize}

\noindent\textbf{De-duplication Techniques.} De-duplication techniques vary across diverse learning paradigms. 

\vspace{1mm}
\textit{Machine Unlearning.} 
We utilize feature-based de-duplication \cite{Pragash21}, which poses a significant challenge to our near-duplication method. Our method deliberately minimizes the feature distance between duplicates and their originals to evade detection, making feature-based de-duplication an effective countermeasure. For implementation, we employ a publicly available pre-trained image classifier, such as VGG16 \cite{Simonyan15}, as the feature extractor.

\vspace{1mm}
\textit{Federated Unlearning.} 
We adopt a combination of representational and functional measures \cite{Csis21NIPS,Klab23arxiv}, commonly used to assess model similarity. This dual approach ensures a strong countermeasure against our near-duplication method because it addresses both the internal representations and external behaviors of the model updates, leaving minimal room for near-duplicates to evade detection.

\vspace{1mm}
\textit{Reinforcement Unlearning.} 
We choose cosine similarity as the metric for detecting duplicates, as it aligns with the common representation of environments as state vectors. This intuitive approach counters our near-duplicates, which are generated by modifying individual grids within an environment, akin to altering specific elements within a vector.

\vspace{2mm}
\noindent\textbf{Unlearning Methods.} 
A commonly used unlearning method, denoted as \textbf{Retrain}, can be implemented across all these learning scenarios. Widely recognized as a gold standard in existing literature \cite{Zhang22NIPS} , the \textbf{Retrain} method involves retraining the model from scratch, excluding the unlearned entities.

In machine unlearning, we utilize two representative methods broadly employed in existing research. 

\begin{itemize}[leftmargin=*]

    \item \textbf{Fisher forgetting \cite{Golatkar20CVPR}.} This method applies additive Gaussian noise to perturb the model towards exact unlearning. The Gaussian distribution has a zero mean and covariance determined by the $4$-th root of the Fisher information matrix with respect to the model on the unlearned data.
    
    \item \textbf{Relabeling \cite{Graves21AAAI}.} This method operates on the unlearned data by altering the labels to randomly selected incorrect ones. Subsequently, these mislabeled data are utilized to fine-tune the model. 

\end{itemize}

Within the context of federated unlearning, given the threat model, our experiments leverage the state-of-the-art client-level federated unlearning method introduced in \cite{Anisa23}.

\begin{itemize}[leftmargin=*]
    \item \textbf{Gradient ascent \cite{Anisa23}.} 
    To unlearn their data, a client optimizes the model parameters to maximize the loss function. 
\end{itemize}

In reinforcement unlearning, the objective is to forget an entire environment. Hence, our experiments exploit the two latest methods proposed in \cite{Ye25NDSS}: decremental reinforcement learning (RL)-based and poisoning-based methods.

\begin{itemize}[leftmargin=*]
    \item \textbf{Decremental RL.} It aims to diminish the agent's previous knowledge by letting it collect experience samples in the unlearning environment and then fine-tuning it with these samples to reduce its performance there.
    \item \textbf{Poisoning.} It encourages the agent to learn new, albeit incorrect, knowledge to remove the unlearning environment. It uses a poisoning approach to modify the unlearning environment and retrains the agent in this modified environment.
\end{itemize}

\vspace{1mm}
\noindent\textbf{Datasets and Model Architectures.}
\begin{itemize}[leftmargin=*]
    \item \textbf{CIFAR10} \cite{Krizhevsky14} includes $60,000$ images across $10$ classes, each containing $6,000$ images of vehicles and animals. The dimension of each image is $32\times 32$. 
    \item \textbf{MNIST} \cite{LeCun98} is a dataset of $70,000$ images of handwritten numerals spanning $10$ classes: $0-9$. Each class has $7000$ images and each image was resized to $32\times 32$.
    \item \textbf{SVHN} \cite{Netzer11NIPS} is a real-world street view house number dataset, consisting of $10$ classes, each representing a digit. It comprises over $600,000$ samples, each measuring $32\times 32$.
    \item \textbf{FaceScrub} \cite{Ng14} is a dataset of URLs for $100,000$ images of 530 individuals. We collected $91,712$ images of $526$ individuals. Each image was resized to $64\times 64$.
\end{itemize}

We implement two model architectures. Our custom architecture consists of four CNN blocks, followed by two fully-connected layers and a softmax function for classification. On the other hand, ResNet \cite{He15CVPR} includes multiple layers with residual blocks and uses projection shortcuts to match dimensions between layers.

\vspace{1mm}
\noindent\textbf{Evaluation Metrics.}
The assessment of unlearning outcomes typically involves three key approaches: membership inference \cite{Shokri17}, backdoor \cite{Pan22USENIX} techniques, and model accuracy \cite{Zhang22NIPS}. However, membership inference, especially in the context of sample-level unlearning, has limitations due to its reliance on discernible output differences between members and non-members of the training dataset \cite{Xu23}. In well-trained models, these differences are often too subtle, rendering membership inference ineffective for confirming unlearning. Therefore, our experiments focus on using backdoor techniques and model accuracy to measure the efficacy of unlearning.

For backdoor techniques, we utilize the approach designed for data ownership verification \cite{Li23TIFS}. This method intentionally induces misclassification in the model for the unlearning samples $\mathcal{D}_A$ that contain a specific `trigger'. After unlearning, these triggered samples should not be misclassified to the previously targeted class, indicating effective unlearning. For experimental consistency, we also integrate the same trigger into the victim data, $\mathcal{D}_V$, thereby making $\mathcal{D}_A$ and $\mathcal{D}_V$ effectively duplicate. We measure the effectiveness of unlearning using the \textit{Attack Success Rate}, a metric that evaluates how often the unlearned model $F_u$ incorrectly classifies the triggered data, $\mathcal{D}_A$ and $\mathcal{D}_V$, into a user-defined class. A lower attack success rate signifies a more successful unlearning outcome.
Note that these triggers are not intended for malicious purposes but used solely as tools to evaluate unlearning efficacy.

On the other hand, for model accuracy, the metrics \cite{Jia23NIPS} employed in our evaluation are outlined below.
\begin{itemize}[leftmargin=*]
    \item \textit{Model fidelity (MF).} This metric refers to the accuracy of unlearned model $F_u$ on the remaining data $\mathcal{D}_r$. 
    \item \textit{Testing accuracy (TA).} This metric assesses the generalization ability of the unlearned model $F_u$ on a test dataset $\mathcal{D}_t$ that excludes any samples from both $\mathcal{D}_u$ and $\mathcal{D}_r$. 
    \item \textit{Unlearning efficacy (UE).} We define this metric as $UE(F_u)=1-Acc_{\mathcal{D}_u}(F_u)$, where $Acc_{\mathcal{D}_u}(F_u)$ denotes the accuracy of $F_u$ on the unlearned data $\mathcal{D}_u$. 
    \item\textit{Unlearning impact (UI).} Unlearning impact is defined as $UI(F_u)=1-Acc_{\mathcal{D}_v}(F_u)$. $\mathcal{D}_v$ represents the victim data, either the same as or functionally equivalent to $\mathcal{D}_u$. 
\end{itemize}


\vspace{-4mm}
\subsection{Overall Results}
We assume that if the unlearning entities are identified as duplications by de-duplication techniques, the model owner will disregard the unlearning request to prevent potential adversarial exploitation. This assumption will be revisited later in the robustness study. 

\subsubsection{Machine Unlearning} 
Table \ref{tab:MLCIFARCustomBackdoor} shows the machine unlearning results on CIFAR10 using the custom model, with a focus on backdoor evaluation. It reveals that data duplication significantly impacts the unlearning process. Notably, the gold standard unlearning method, retraining from scratch, records a backdoor attack success rate of $99.6\%$ on the unlearned data $\mathcal{D}_A$. This high rate suggests that the retrained model retains knowledge of the trigger, indicating a failure to unlearning. Leveraging this failure, the adversary can assert that the model owner is dishonest, thereby achieving their goal. This outcome is due to the presence of a duplicate version of $\mathcal{D}_A$ in the training set, which means the retraining still exposes the model to the patterns and features of the unlearned data, including the trigger, resulting in a high attack success rate.

The Fisher forgetting method produces results similar to the retraining method, with a high backdoor attack success rate after unlearning, indicating a failure in unlearning even when de-duplication techniques are employed. As a result, the adversary can also achieve their objective when the model owner adopts the Fisher forgetting method. This failure arises because the Fisher forgetting approach, while designed to statistically reduce the influence of specific data points on the model's parameters, may not address the complex interdependencies between features in a neural network. When data duplication exists, the removal of influences associated with one set of data does not necessarily eliminate the residual effects from its duplicates, including triggers used in backdoor evaluations. Thus, despite the application of Fisher forgetting, the model continues to recognize and respond to these triggers, leading to a high attack success rate. 

By comparison, the relabeling method can achieve effective unlearning with a low backdoor attack success rate when de-duplication techniques are not applied. Thus, the relabeling method can effectively defend against the adversary’s attack. However, this comes at a significant cost, completely compromising the model’s performance by reducing test accuracy to approximately $11\%$. The introduction of de-duplication techniques complicates the outcome: the relabeling method fails to unlearn effectively when duplicates are completely identical, and severely damages model efficacy when duplicates are nearly identical. This occurs because the relabeling method relies on mislabeling the unlearned data to disrupt its influence on the model. In scenarios without de-duplication, mislabeling can sufficiently confuse the model about these data points. However, when de-duplication is active, if the duplicates are not exactly identical but close, the method struggles to differentiate between what needs to be unlearned and what does not. This confusion leads to the model learning incorrect information across similar but not identical duplicates, thereby degrading overall model performance.

\begin{table}[!ht]\scriptsize
\vspace{-2mm}
	\centering
	\caption{Machine unlearning results with backdoor evaluation on CIFAR10 using the custom model. Here, $FF$ stands for the Fisher forgetting method, $Rela.$ denotes the relabeling method, $CD$ refers to complete duplication, $ND$ represents near-duplication, and $DeDup$ denotes de-duplication.}
 \vspace{-0mm}
\begin{tabular} {ccccc} 
\toprule\rowcolor{gray!40}
  & \multicolumn{2}{c}{Accuracy} & \multicolumn{2}{c}{Attack Success Rate}\\
 \midrule
 & Train & Test & $\mathcal{D}_V$ & $\mathcal{D}_A$\\ \hline
Before unlearn  & $84.42$ & $68.62$ & $99.98$ & $99.98$ \\\hline \rowcolor{gray!40}
Retrain  & $85.35$ & $69.65$ & $99.96$ & $99.96$ \\ 
FF+CD w/o DeDup & $84.48$ & $68.64$ & $99.98$ & $99.98$ \\\rowcolor{gray!40} 
FF+CD with DeDup & $84.42$ & $68.62$ & $99.98$ & $99.98$  \\
FF+ND w/o DeDup & $86.5$ & $69.01$ & $100$ & $99.78$ \\\rowcolor{gray!40} 
FF+ND with DeDup & $86.5$ & $69.01$ & $99.95$ & $99.86$ \\ 
Rela.+CD w/o DeDup & $11.75$ & $11.77$ & $79.3$ & $1.14$ \\ \rowcolor{gray!40}
Rela.+CD with DeDup & $84.42$ & $68.62$ & $99.98$ & $99.98$ \\
Rela.+ND w/o DeDup & $11.87$ & $11.34$ & $99.98$ & $9.34$ \\\rowcolor{gray!40}
Rela.+ND with DeDup & $13.2$ & $12.88$ & $99.9$ & $6.68$ \\
\bottomrule
\end{tabular}
	\label{tab:MLCIFARCustomBackdoor}
 \vspace{-1mm}
\end{table}

Similar outcomes from the three approaches on the SVHN dataset can be observed in Table \ref{tab:MLSVHNCustomBackdoor}, demonstrating that the impact of data duplication is consistent across various datasets. This consistency underscores the pervasive nature of duplication issues in machine learning models.

\begin{table}[!ht]\scriptsize
\vspace{-2mm}
	\centering
	\caption{Machine unlearning results with backdoor evaluation on SVHN using the custom model.}
 \vspace{-0mm}
\begin{tabular} {ccccc} 
\toprule\rowcolor{gray!40}
  & \multicolumn{2}{c}{Accuracy} & \multicolumn{2}{c}{Attack Success Rate}\\
 \midrule
 & Train & Test & $\mathcal{D}_V$ & $\mathcal{D}_A$\\ \hline
Before unlearn  & $92.62$ & $87.04$ & $99.97$ & $99.99$ \\\hline \rowcolor{gray!40}
Retrain  & $92.62$ & $86.93$ & $99.96$ & $99.96$ \\ 
FF+CD w/o DeDup & $92.51$ & $86.98$ & $99.96$ & $99.99$ \\\rowcolor{gray!40} 
FF+CD with DeDup & $92.62$ & $87.04$ & $99.97$ & $99.99$ \\
FF+ND w/o DeDup & $94.95$ & $88.69$ & $99.85$ & $98.94$ \\\rowcolor{gray!40}
FF+ND with DeDup & $94.95$ & $88.69$ & $99.83$ & $98.92$ \\ 
Rela.+CD w/o DeDup & $24.5$ & $16.86$ & $100$ & $14.96$ \\ \rowcolor{gray!40}
Rela.+CD with DeDup & $92.62$ & $87.04$ & $99.97$ & $99.99$ \\
Rela.+ND w/o DeDup & $9.32$ & $8.97$ & $100$ & $0.85$ \\\rowcolor{gray!40}
Rela.+ND with DeDup & $10.2$ & $9.75$ & $100$ & $0.83$ \\
\bottomrule
\end{tabular}
	\label{tab:MLSVHNCustomBackdoor}
 \vspace{-1mm}
\end{table}

Table \ref{tab:MLCIFARCustom} presents the machine unlearning results with a model accuracy evaluation on CIFAR10 using the custom model. Notably, the traditional unlearning method of retraining from scratch demonstrates low unlearning efficacy at $27.8\%$, meaning the retrained model retains a high accuracy of $72.2\%$ on the unlearned data. This can be attributed to two factors. Firstly, the model may still be memorizing the features shared between the unlearned data and their duplicates. Secondly, the model's inherent generalizability might allow for high classification accuracy. However, given that the test accuracy of the retrained model is lower at $67.8\%$, it suggests that the model's high accuracy on the unlearned data is more likely due to residual memorization, caused by duplicated versions of the data still present in the training set.

\begin{table}[!ht]\scriptsize
\vspace{-1mm}
	\centering
	\caption{Machine unlearning results with model accuracy evaluation on CIFAR10 using the custom model.}
 \vspace{-0mm}
\begin{tabular} {ccccc} 
\toprule\rowcolor{gray!40}
  & Mode. fide. & Test accu. & Unle. effi. & Unle. impa.\\
 \midrule
Before unlearn  & $80.6$ & $68.1$ & $14.2$ & $13.0$ \\\hline \rowcolor{gray!40}
Retrain  & $81.5$ & $67.8$ & $27.8$ & $13.5$ \\
FF+CD w/o DeDup & $84.2$ & $67.5$ & $22.2$ & $5.2$ \\\rowcolor{gray!40}
FF+CD with DeDup & $80.6$ & $68.1$ & $14.2$ & $13.0$ \\
FF+ND w/o DeDup & $80.6$ & $68.2$ & $13.7$ & $13.0$ \\\rowcolor{gray!40}
FF+ND with DeDup & $80.6$ & $68.2$ & $13.7$ & $13.0$ \\
Rela.+CD w/o DeDup & $10.3$ & $9.8$ & 87.5 & $90.8$ \\ \rowcolor{gray!40}
Rela.+CD with DeDup & $80.6$ & $68.1$ & $14.2$ & $13.0$ \\
Rela.+ND w/o DeDup & $9.9$ & $10.4$ & $91.3$ & $91.5$ \\\rowcolor{gray!40}
Rela.+ND with DeDup & $12.7$ & $12.5$ & $88.5$ & $87.7$ \\
\bottomrule
\end{tabular}
	\label{tab:MLCIFARCustom}
 \vspace{-1mm}
\end{table}

For the Fisher forgetting method, a noteworthy phenomenon emerges: it consistently achieves very low unlearning efficacy, regardless of whether de-duplication is applied or whether complete or near duplication is present. This may be attributed to several factors. Firstly, the Fisher forgetting method relies on estimating the Fisher information matrix to compute the amount of forgetting required. However, in scenarios involving duplicated data, the Fisher information estimates may become unreliable due to the presence of redundant or correlated information. As a result, the method may struggle to accurately gauge the extent of forgetting necessary to remove duplicated instances from the model's memory.
Secondly, the Fisher forgetting method may inherently prioritize preserving information from unique data instances while inadvertently neglecting duplicated data. This bias towards unique instances could result in ineffective unlearning of duplicated data, leading to persistently low unlearning efficacy across various duplication scenarios.
Finally, the Fisher forgetting method's reliance on gradient-based optimization techniques may introduce challenges in scenarios with duplicated data. Gradient updates may inadvertently reinforce the model's memory of duplicated instances, thereby hindering the unlearning process. This contributes to the consistent low unlearning efficacy of the Fisher forgetting method.

For the relabeling method, in the absence of de-duplication techniques, it exhibits high unlearning efficacy. However, this comes at the expense of model fidelity and test accuracy in scenarios involving both complete and near duplication. The poor performance of the relabeling method can be attributed to several factors. Firstly, this method may struggle to effectively discern between original and duplicated data instances, leading to inaccurate updates to the model parameters during the unlearning process. Additionally, the inherent nature of relabeling may introduce noise into the unlearning process, particularly when applied to duplicated data, thereby undermining the model's ability to generalize to unseen samples. 

When de-duplication techniques are implemented, the relabeling method is not executed if they involve complete duplication, as complete duplication can be easily detected by the de-duplication techniques. However, in cases of near duplication, its outcomes mirror those achieved without applying de-duplication. This suggests that our proposed near-duplication methods enable duplicated data to circumvent detection, thereby still impacting the unlearning process.

\begin{table}[!ht]\scriptsize
\vspace{-1mm}
	\centering
	\caption{Machine unlearning results with model accuracy evaluation on SVHN using the custom model.}
 \vspace{-0mm}
\begin{tabular} {ccccc} 
\toprule\rowcolor{gray!40}
  & Mode. fide. & Test accu. & Unle. effi. & Unle. impa.\\
 \midrule
Before unlearn  & $93.8$ & $87.3$ & $15.2$ & $16.2$ \\\hline \rowcolor{gray!40}
Retrain  & $93$ & $86.5$ & $9.7$ & $6.3$ \\
Rela.+CD w/o DeDup & $13.9$ & $14.6$ & $86.7$ & $87.2$ \\ \rowcolor{gray!40} 
Rela.+CD with DeDup & $93.8$ & $87.3$ & $15.2$ & $16.2$ \\
Rela.+ND w/o DeDup & $7.6$ & $7.4$ & $92.2$ & $91.8$ \\ \rowcolor{gray!40} 
Rela.+ND with DeDup & $9.0$ & $8.5$ & $91.0$ & $92.2$ \\
FF+CD w/o DeDup & $94.3$ & $88.0$ & $5.3$ & $2.8$ \\ \rowcolor{gray!40}
FF+CD with DeDup & $93.8$ & $87.3$ & $15.2$ & $16.2$ \\
FF+ND w/o DeDup & $93.8$ & $87.4$ & $14.8$ & $6.3$ \\ \rowcolor{gray!40}
FF+ND with DeDup & $93.8$ & $87.4$ & $14.8$ & $6.3$ \\
\bottomrule
\end{tabular}
	\label{tab:MLSVHNCustom}
 \vspace{-1mm}
\end{table}

Table \ref{tab:MLSVHNCustom} shows the machine unlearning outcomes with a model accuracy evaluation on SVHN using the custom model. These results exhibit a notable consistency with those observed on CIFAR10, as presented in Table \ref{tab:MLCIFARCustom}. This consistency underscores the persistence of both the impact of duplicated data and the effectiveness of our near-duplication method across different datasets.

\vspace{2mm}
\noindent\textbf{Summary.} In standard machine unlearning, the adversary easily achieves their goal of challenging the model owner’s success in unlearning under both the retraining-from-scratch and Fisher forgetting methods. The relabeling method can defend against the adversary’s attack by significantly reducing the backdoor attack success rate and increasing the unlearning efficacy, but it does so at a substantial cost of completely degrading the model’s performance.

\subsubsection{Federated Unlearning} 
In federated unlearning, the unlearned data $\mathcal{D}_u$ is redefined to represent the private data of the unlearned client. The remaining data $\mathcal{D}_r$ refers to the private data of all other clients. The test dataset $\mathcal{D}_t$ denotes the local dataset held by the server and utilized for testing the accuracy of the global model. 

Table \ref{tab:FedMNISTCustom} shows the federated unlearning results on MNIST with the custom model. We observe that before unlearning, the model demonstrates high fidelity ($98.44\%$) and testing accuracy ($98.18\%$), coupled with low unlearning efficacy ($1.4\%$). This outcome arises from the model's effective training without any unlearning procedures. Then, employing the retraining-from-scratch approach to unlearn the global model allows for the preservation of its performance. This occurs because the retraining approach simply discards those duplicates and utilizes the remaining clients to retrain a global model. Furthermore, since the adversary client's data has never been incorporated into the training process, the retrained global model exhibits a deterioration in performance when evaluated on the adversary client's data, consequently yielding a high unlearning efficacy. 
Thus, the retraining-from-scratch approach effectively counters the adversary's objective. However, this approach is often impractical in real-world scenarios, particularly when the number of clients is large, due to its significant computational and resource demands.

\begin{table}[!ht]\scriptsize
\vspace{-2mm}
	\centering
 \vspace{-0mm}
	\caption{Federated unlearning results on MNIST with the custom model. Here, $GA$ refers to the gradient ascent method.}
\begin{tabular} {cccc} 
\toprule\rowcolor{gray!40}
  & Model fidelity & Test accuracy & Unlearn efficacy \\
 \midrule
Before unlearn  & $98.44$ & $98.18$ & $1.4$  \\\hline \rowcolor{gray!40}
Retrain & $96.45$ & $97.82$ & $85.16$   \\
GA+CD w/o DeDup & $17.89$ & $18.46$ & $96.32$  \\\rowcolor{gray!40}
GA+CD with DeDup & $98.03$ & $98.11$ & $1.7$  \\
GA+ND w/o DeDup & $39.11$ & $39.38$ & $92.38$ \\\rowcolor{gray!40}
GA+ND with DeDup & $43.33$ & $43.1$ & $91.48$ \\
\bottomrule
\end{tabular}
	\label{tab:FedMNISTCustom}
 \vspace{-2mm}
\end{table}

When the model owner employs the gradient ascent unlearning approach, a significant decline in model fidelity and testing accuracy is observed, accompanied by a notable increase in unlearning efficacy. A closer examination of the third to sixth rows reveals several interesting phenomena. In the third row, we observe that without applying any de-duplication technique, unlearning the completely duplicated local models can demolish the global model, resulting in a very low model fidelity ($17.89\%$) and testing accuracy ($18.46\%$), while achieving a remarkably high unlearning efficacy ($96.32\%$). This occurs because unlearning these complete duplicates is equivalent to unlearning the entire global model, given that the global model was aggregated from those local models. However, such complete duplication is easily identified by de-duplication techniques, as demonstrated in the fourth row. In cases where the server declines to unlearn duplicates, the model's performance can be preserved. 

With the introduction of our near-duplication method, the results become even more intriguing. In situations where no de-duplication technique is applied (the fifth row), unlearning near-duplicates leads to higher model fidelity and testing accuracy with a lower unlearning efficacy compared to the complete duplication scenario (the third row). This is because near-duplicates exhibit minor differences compared to complete duplicates, which can prevent the model from destruction. 
However, these differences can also enable near-duplicates to evade detection by de-duplication techniques (the sixth row), resulting in similar outcomes to the scenario without de-duplication (the fifth row). This implies that carefully designed near-duplicates can evade existing de-duplication techniques. 

The above results highlight the adversary's successful achievement of their objective: significantly reducing the global model's utility. While de-duplication techniques can effectively detect complete duplicates, they are easily circumvented by our crafted near-duplicates.


\begin{table}[!ht]\scriptsize
\vspace{-2mm}
	\centering
	\caption{Federated unlearning results on CIFAR10 with the custom model.}
 \vspace{-0mm}
\begin{tabular} {cccc} 
\toprule\rowcolor{gray!40}
  & Model fidelity & Test accuracy & Unlearn efficacy \\
 \midrule
Before unlearn  & $65.75$ & $62.77$ &  $47.44$  \\\hline \rowcolor{gray!40}
Retrain & $55.62$ & $58.75$ &  $55.28$  \\
GA+CD w/o DeDup & $9.96$ & $10.01$ & $97.83$  \\\rowcolor{gray!40}
GA+CD with DeDup & $62.04$ & $61.68$ & $32.36$  \\
GA+ND w/o DeDup & $10.08$ & $9.98$ & $98.19$ \\\rowcolor{gray!40}
GA+ND with DeDup & $10.15$ & $10.24$ & $97.81$ \\
\bottomrule
\end{tabular}
	\label{tab:FedCIFARCustom}
 \vspace{-2mm}
\end{table}

Table \ref{tab:FedCIFARCustom} presents the federated unlearning outcomes on CIFAR10 with the custom model. Generally, the results align with those observed on MNIST, as depicted in Table \ref{tab:FedMNISTCustom}, indicating that unlearning duplicates also results in negative impacts on CIFAR10. The primary difference between the two tables lies in the third and fifth rows: on CIFAR10, unlearning near-duplicates does not yield much difference from unlearning complete duplicates, whereas this contrast is not observed on MNIST. This dissimilarity can be attributed to the complexity disparity between the two datasets. CIFAR10 is inherently more complex than MNIST, including the diversity of objects and background clutter. As a result, the local models computed during training on CIFAR10 are likely to exhibit greater complexity compared to those on MNIST.

When unlearning near-duplicates, the slight differences between local models may be more significant in CIFAR10 due to its higher complexity. Thus, even near-duplicates can have a substantial impact on the model's performance, leading to outcomes similar to those observed with complete duplicates.
On the other hand, MNIST, being a simpler dataset, may be more resilient to minor differences in local models. Thus, the impact of near-duplicates on model performance may be less pronounced on MNIST compared to CIFAR10.

\begin{table}[!ht]\scriptsize
\vspace{-2mm}
	\centering
	\caption{Federated unlearning results on FaceScrub with the custom model.}
 \vspace{-0mm}
\begin{tabular} {cccc} 
\toprule\rowcolor{gray!40}
  & Model fidelity & Test accuracy & Unlearn efficacy \\
 \midrule
Before unlearn  & $89.31$ & $88.87$ &  $10.88$  \\\hline \rowcolor{gray!40}
Retrain & $85.59$ & $86.78$ &  $82.82$  \\
GA+CD w/o DeDup & $15.62$ & $17.42$ & $95.84$  \\\rowcolor{gray!40}
GA+CD with DeDup & $88.52$ & $87.89$ & $11.24$  \\
GA+ND w/o DeDup & $36.57$ & $35.73$ & $94.68$ \\\rowcolor{gray!40}
GA+ND with DeDup & $41.41$ & $40.25$ & $88.10$ \\
\bottomrule
\end{tabular}
	\label{tab:FedFaceCustom}
 \vspace{-1mm}
\end{table}

Table \ref{tab:FedFaceCustom} displays the federated unlearning outcomes on FaceScrub with the custom model. Remarkably, we observe a consistent trend akin to the results obtained on MNIST and CIFAR10 datasets. This underscores the adaptability and robustness of our proposed near-duplication methods across diverse datasets. Moreover, it highlights the pervasive impact of duplicates on the unlearning process, irrespective of the dataset characteristics.

\vspace{2mm}
\noindent\textbf{Summary.} In federated unlearning, the adversary successfully achieves their goal of severely compromising the global model under the gradient ascent approach. In contrast, their objective is less attainable under the retraining-from-scratch approach. However, retraining poses practical challenges in real-world scenarios, particularly when the number of clients is large.

\subsection{Adaptability Study}
\subsubsection{Alternative Model Architectures}
We conducted conventional machine unlearning and federated unlearning using ResNet, while employing Deep Deterministic Policy Gradient (DDPG) for reinforcement unlearning. DDPG comprises an actor network, responsible for learning a deterministic policy, and a critic network, tasked with evaluating the quality of actions selected by the actor \cite{DDPG}.

\vspace{2mm}
\noindent\textbf{Machine Unlearning.} The results of machine unlearning on CIFAR10 and SVHN using ResNet are presented in Tables \ref{tab:MLCIFARResNet} and \ref{tab:MLSVHNResNet}, respectively. The trends observed are consistent with those obtained using the custom model. However, there are several noteworthy differences that warrant attention.

\begin{table}[!ht]\scriptsize
\vspace{-1mm}
	\centering
	\caption{Machine unlearning results on CIFAR10 with ResNet.}
 \vspace{-0mm}
\begin{tabular} {ccccc} 
\toprule\rowcolor{gray!40}
  & Mode. fide. & Test accu. & Unle. effi. & Unle. Impa. \\
 \midrule
Before unlearn  & $96.6$ & $80.8$ & $0$ & $0$ \\\hline \rowcolor{gray!40}
Retrain  & $96.6$ & $80.7$ & $12.8$ & $0$ \\
FF+CD w/o DeDup & $96.5$ & $80.1$ & $10.8$ & $0$\\\rowcolor{gray!40}
FF+CD with DeDup & $96.6$ & $80.8$ & $0$ & $0$\\
FF+ND w/o DeDup & $96.5$ & $80.0$ & $0$ & $0$ \\\rowcolor{gray!40}
FF+ND with DeDup & $96.5$ & $80.0$ & $0$ & $0$ \\
Rela.+CD w/o DeDup & $11.5$ & $11.2$ & $90.0$ & $89.8$ \\\rowcolor{gray!40}
Rela.+CD with DeDup & $96.6$ & $80.8$ & $0$ & $0$ \\
Rela.+ND w/o DeDup & $9.0$ & $9.0$ & $93.3$ & $92.5$ \\\rowcolor{gray!40}
Rela.+ND with DeDup & $12.7$ & $12.5$ & $88.5$ & $87.7$ \\
\bottomrule
\end{tabular}
	\label{tab:MLCIFARResNet}
 \vspace{-2mm}
\end{table}

\begin{table}[!ht]\scriptsize
\vspace{-2mm}
	\centering
	\caption{Machine unlearning results on SVHN with ResNet.}
 \vspace{-0mm}
\begin{tabular} {ccccc} 
\toprule\rowcolor{gray!40}
  & Mode. fide. & Test accu. & Unle. effi. & Unle. Impa. \\
 \midrule
Before unlearn  & $100$ & $94.5$ & $0$ & $0$ \\\hline \rowcolor{gray!40}
Retrain  & $100$ & $94.4$ & $0$ & $0$ \\
FF+CD w/o DeDup & $100$ & $93.8$ & $0$ & $0$\\\rowcolor{gray!40}
FF+CD with DeDup & $100$ & $94.5$ & $0$ & $0$\\
FF+ND w/o DeDup & $100$ & $94.4$ & $0$ & $0$ \\\rowcolor{gray!40}
FF+ND with DeDup & $100$ & $94.4$ & $0$ & $0$ \\
Rela.+CD w/o DeDup & $7.4$ & $7.3$ & $93.2$ & $93.8$ \\\rowcolor{gray!40}
Rela.+CD with DeDup & $100$ & $94.5$ & $0$ & $0$ \\
Rela.+ND w/o DeDup & $9.7$ & $9.5$ & $91.0$ & $90.8$ \\\rowcolor{gray!40}
Rela.+ND with DeDup & $9.1$ & $8.8$ & $90.0$ & $90.0$ \\
\bottomrule
\end{tabular}
	\label{tab:MLSVHNResNet}
 \vspace{-1mm}
\end{table}

As depicted in Table \ref{tab:MLSVHNResNet}, the retraining-from-scratch approach attains an unlearning efficacy as low as $0$, indicating a classification accuracy of $100\%$ on the unlearned data. This failure of the gold standard approach signifies its inability to forget information from the unlearned data in this scenario. This failure could be attributed to two factors.
Firstly, ResNet models are renowned for their high generalization ability, allowing them to effectively learn complex patterns from data. However, this same capability may hinder their ability to unlearn information, particularly in scenarios involving duplicated data. 
Furthermore, the high capacity and expressiveness of ResNet models may result in the model memorizing the duplicated instances, rather than learning to generalize and discard them during unlearning. This phenomenon can lead to poor unlearning efficacy, as observed in the results.

In both Tables \ref{tab:MLCIFARResNet} and \ref{tab:MLSVHNResNet}, a notable observation is the failure of the Fisher forgetting method to unlearn duplicated data, regardless of the application of de-duplication or the presence of complete or near duplication. This phenomenon suggests inherent limitations in the Fisher forgetting method when confronted with duplicated data.
One possible explanation for this failure is the method's reliance on gradient-based optimization techniques. The Fisher forgetting method estimates the Fisher information matrix to compute the amount of forgetting necessary for unlearning. However, in scenarios with duplicated data, gradients may become unstable or misleading, making it challenging for the method to accurately estimate the required forgetting.

As a result, the inability of both the retraining-from-scratch and Fisher forgetting approaches to achieve effective unlearning demonstrates the adversary’s success in challenging the model owner’s claim of unlearning efficacy.

\vspace{2mm}
\noindent\textbf{Federated unlearning.} The results of federated unlearning on CIFAR10 using ResNet are shown in Table \ref{tab:FedCIFARResNet}, mirroring a similar trend observed in the results obtained with the custom model, as depicted in Table \ref{tab:FedCIFARCustom}. 
These findings highlight the adversary’s success in achieving their objective of significantly degrading the global model's performance.

A notable difference between the two tables lies in the model fidelity and test accuracy achieved using ResNet, which generally outperforms the custom model. This disparity can be attributed to the high capability of ResNet in capturing complex patterns and features inherent in the CIFAR10 dataset. However, despite its superior performance, ResNet is still susceptible to the adverse effects of duplicated data.

\begin{table}[!ht]\scriptsize
\vspace{-3mm}
	\centering
	\caption{Federated unlearning on CIFAR10 with ResNet.}
 \vspace{-0mm}
\begin{tabular} {cccc} 
\toprule\rowcolor{gray!40}
  & Model fidelity & Test accuracy & Unlearn efficacy \\
 \midrule
Before unlearn  & $83.76$ & $83.56$ &  $15.42$  \\\hline \rowcolor{gray!40}
Retrain & $74.53$ & $75.42$ &  $24.21$  \\
GA+CD w/o DeDup & $22.41$ & $21.54$ & $96.66$  \\\rowcolor{gray!40}
GA+CD with DeDup & $80.22$ & $82.14$ & $17.5$  \\
GA+ND w/o DeDup & $46.52$ & $46.45$ & $97.62$ \\\rowcolor{gray!40}
GA+ND with DeDup & $46.64$ & $45.21$ & $90.78$ \\
\bottomrule
\end{tabular}
	\label{tab:FedCIFARResNet}
 \vspace{-2mm}
\end{table}

\vspace{-2mm}
\subsubsection{Varying Numbers of Unlearned Data}
For machine unlearning, the previous experimental results were obtained using $600$ unlearned data. We extend this number to $2,000$ and evaluate the effects of this increased quantity. 
In federated unlearning, the emphasis is on unlearning clients, so we raise the number of unlearned clients from $1$ to $2$. 

\vspace{2mm}
\noindent\textbf{Machine Unlearning.} The results of unlearning $2,000$ data on CIFAR10 and SVHN are displayed in Tables \ref{tab:MLCIFARCustomUnlearn2000} and \ref{tab:MLSVHNCustomUnlearn2000}, respectively. Interestingly, these results closely mirror those obtained when unlearning $600$ data. This suggests that the impact of duplicated data remains consistent regardless of the quantity of unlearned data, with the only expected difference being a decrease in model fidelity and test accuracy due to the additional unlearned data.
This observation shows the adversary's success in achieving their objective and underscores the robust and persistent nature of the impact of duplicated data on the unlearning process. Despite varying the quantity of unlearned data, the presence of duplicated data continues to exert a significant influence on model performance.

\begin{table}[!ht]\scriptsize
\vspace{-3mm}
	\centering
	\caption{Machine unlearning results on CIFAR10 with $2,000$ unlearned data.}
 \vspace{-0mm}
\begin{tabular} {ccccc} 
\toprule\rowcolor{gray!40}
  & Mode. fide. & Test accu. & Unle. effi. & Unle. impa.\\
 \midrule
Before unlearn  & $80.9$ & $67.87$ & $18.4$ & $12.4$ \\\hline \rowcolor{gray!40}
Retrain  & $86.5$ & $68.0$ & $32.2$ & $10.5$ \\ 
FF+ND w/o DeDup & $81.0$ & $68.0$ & $18.4$ & $12.3$ \\ \rowcolor{gray!40}
Rela.+ND w/o DeDup & $17.9$ & $17.1$ & $83.4$ & $81.1$ \\

\bottomrule
\end{tabular}
	\label{tab:MLCIFARCustomUnlearn2000}
 \vspace{-2mm}
\end{table}

\begin{table}[!ht]\scriptsize
\vspace{-3mm}
	\centering
	\caption{Machine unlearning results on SVHN with $2,000$ unlearned data.}
 \vspace{-0mm}
\begin{tabular} {ccccc} 
\toprule\rowcolor{gray!40}
  & Mode. fide. & Test accu. & Unle. effi. & Unle. impa.\\
 \midrule
Before unlearn  & $96.2$ & $87.4$ & $2.8$ & $2.2$ \\\hline \rowcolor{gray!40}
Retrain  & $95.7$ & $87.2$ & $9.1$ & $4.3$ \\
FF+ND w/o DeDup & $96.2$ & $87.4$ & $2.8$ & $2.2$ \\ \rowcolor{gray!40}
Rela.+ND w/o DeDup & $13.6$ & $14.5$ & $85.4$ & $86.2$ \\
\bottomrule
\end{tabular}
	\label{tab:MLSVHNCustomUnlearn2000}
 \vspace{-2mm}
\end{table}

The consistent effect of duplicated data across different quantities of unlearned data can be attributed to several factors. Firstly, duplicated data introduces redundancy into the training process, which can interfere with the model's ability to generalize effectively. Additionally, duplicated data may lead to overfitting, where the model learns to memorize the duplicated instances rather than generalize from them.

\vspace{2mm}
\noindent\textbf{Federated Unlearning.} The results of unlearning $2$ clients on MNIST, as shown in Table \ref{tab:FedMNISTCustom2Clients}, align with the outcomes observed when unlearning $1$ client, indicating the adversary's success in achieving their objective. This observation underscores that unlearning $2$ clients does not incur additional damage to the global model compared to unlearning $1$ client. This can be attributed to the inherent adaptability of the federated learning framework. In this framework, unlearning a client is akin to removing that client from the global model training process. This adaptability allows the FL system to effectively cope with the unlearning of multiple clients without experiencing proportionate harm to the global model.

\begin{table}[!ht]\scriptsize
\vspace{-3mm}
	\centering
	\caption{Federated unlearning results on MNIST with $2$ unlearned clients.}
 \vspace{-0mm}
\begin{tabular} {cccc} 
\toprule\rowcolor{gray!40}
  & Model fidelity & Test accuracy & Unlearn efficacy \\
 \midrule
Before unlearn  & $98.44$ & $98.16$ & $1.4$  \\\hline \rowcolor{gray!40}
Retrain & $96.57$ & $97.70$ & $85.67$   \\
GA+CD w/o DeDup & $16.52$ & $16.25$ & $96.73$  \\\rowcolor{gray!40}
GA+CD with DeDup & $97.86$ & $98.48$ & $2.12$  \\
GA+ND w/o DeDup & $39.11$ & $39.14$ & $91.89$ \\\rowcolor{gray!40}
GA+ND with DeDup & $44.43$ & $42.96$ & $92.05$ \\
\bottomrule
\end{tabular}
	\label{tab:FedMNISTCustom2Clients}
 \vspace{-2mm}
\end{table}

\vspace{2mm}
\noindent\textbf{Summary.} Complex models, such as ResNet, exhibit greater vulnerability to duplicates compared to our customized models in machine unlearning. However, in federated unlearning, complex models demonstrate superior performance over customized models when duplicates are present. Furthermore, the number of unlearned data points or clients has minimal impact on unlearning outcomes in both machine unlearning and federated unlearning with the presence of duplicates.

\subsection{Robustness Study}
In the previous results, we assumed that unlearning would be denied when duplications are detected. Here, we conduct a detailed study of the robustness of our methods by proposing two countermeasures. Note that these countermeasures are described within machine unlearning but can be easily extended to federated and reinforcement unlearning by substituting `data' with `model updates' and `environments', respectively.

\begin{itemize}[leftmargin=*]
    \item For detected duplicates, the model owner unlearns both the requested data and their duplicates. Formally, suppose the adversary requests to unlearn their data $\mathcal{D}_A$, and a portion of this data is detected by the model owner as duplicates. Let the detected duplicates be $\mathcal{D}'_A \subseteq \mathcal{D}_A$, with the corresponding original data in the training set denoted as $\mathcal{D}'_V$. The model owner unlearns both $\mathcal{D}_A$ and $\mathcal{D}'_V$ to mitigate potential adversarial activities, such as verification challenges.
    \item For detected duplicates $\mathcal{D}'_A$, the model owner removes the corresponding original data $\mathcal{D}'_V$ and unlearns the requested data $\mathcal{D}_A$ to mitigate the risk of potential poisoned data.
\end{itemize}

In addition to evaluating the adopted de-duplication strategies, we also assess the most advanced scenario by assuming a perfect model owner capable of fully detecting all duplicates. Furthermore, to simplify the explanation of evaluation results, we introduce a new metric: \emph{Acc. on $\mathcal{D}_u$}, which represents the model's accuracy on the unlearned data $\mathcal{D}_u$.

\vspace{2mm}
\noindent\textbf{Machine Unlearning.} Tables \ref{tab:MLCIFARDefense1-50} and \ref{tab:MLCIFARDefense1-100} present the results of applying the first countermeasure: unlearning both the requested data and their detected duplicates under the adopted and perfect de-duplication strategies, respectively. By comparing the two tables, it is evident that the Fisher forgetting and relabeling methods, even when fine-tuned on the remaining data, yield very similar results. This indicates that identifying only a portion of the duplicates and identifying all duplicates produce similar results for these unlearning methods. This similarity arises because both methods fundamentally alter the model's internal representations to diminish the influence of the unlearned data. The Fisher forgetting method statistically reduces the impact of specific data points on the model's parameters, while the relabeling method disrupts the model's learned associations by introducing intentional mislabeling. In scenarios where duplicates are either partially or fully detected, these methods behave similarly in terms of their capacity to degrade the model's reliance on duplicated data. Specifically, both the Fisher forgetting method and the relabeling method with fine-tuning achieve very high accuracy on the unlearned data, exceeding not only the test accuracy but also the fidelity of the remaining training data. This allows the adversary to challenge the model owner's success in unlearning during verification. On the other hand, the relabeling method without fine-tuning achieves low accuracy on the unlearned data, supporting unlearning verification. However, this comes at the significant cost of severely compromising the model’s utility, as evidenced by its extremely low test accuracy.

The key difference between the two tables lies in the performance of the retraining-from-scratch approach. Under the adopted de-duplication strategy (Table \ref{tab:MLCIFARDefense1-50}), approximately $50\%$ of the duplicates are detected. In this scenario, the retraining process includes the remaining $50\%$ of undetected duplicates, resulting in a high accuracy on the unlearned data. As this accuracy surpasses the test accuracy, the adversary can still challenge the success of unlearning. Conversely, under the perfect de-duplication strategy (Table \ref{tab:MLCIFARDefense1-100}), all duplicates are accurately detected. As a result, the retraining process excludes all duplicates and their original counterparts, leading to a low accuracy on the unlearned data comparable to the test accuracy. In this ideal case, the retraining approach can withstand the adversary's verification challenge while preserving the model's utility. However, the retraining approach is rarely used in practice due to its high computational overhead.

\begin{table}[!ht]\scriptsize
\vspace{-2mm}
	\centering
	\caption{Machine unlearning results on CIFAR10 using the first countermeasure: unlearning both the requested data and their detected duplicates, under the adopted de-duplication.}
 \vspace{-0mm}
\begin{tabular} {ccccc} 
\toprule\rowcolor{gray!40}
  & Mode. fide. & Test accu. & Unle. effi. & Acc. on $\mathcal{D}_u$\\
 \midrule
Before unlearn  & $77.2$ & $69.1$ & $15.9$ & $84.1$ \\\hline \rowcolor{gray!40}
Retrain  & $79.5$ & $68.8$ & $24.1$ & $75.9$ \\
FF & $77.1$ & $69.0$ & $16.1$ & $83.9$ \\\rowcolor{gray!40}
Rela. & $13.6$ & $13.2$ & $86.2$ & $13.8$ \\
Rela.+FT & $73.2$ & $65.6$ & $24.4$ & $75.6$ \\
\bottomrule
\end{tabular}
	\label{tab:MLCIFARDefense1-50}
 \vspace{-2mm}
\end{table}

\begin{table}[!ht]\scriptsize
\vspace{-2mm}
	\centering
	\caption{Machine unlearning results on CIFAR10 using the first countermeasure under the perfect de-duplication.}
 \vspace{-0mm}
\begin{tabular} {ccccc} 
\toprule\rowcolor{gray!40}
  & Mode. fide. & Test accu. & Unle. effi. & Acc. on $\mathcal{D}_u$\\
 \midrule
Before unlearn  & $77.2$ & $69.1$ & $15.9$ & $84.1$ \\\hline \rowcolor{gray!40}
Retrain  & $80.3$ & $67.7$ & $30.5$ & $69.5$ \\
FF & $76.9$ & $69.0$ & $16.1$ & $83.9$ \\\rowcolor{gray!40}
Rela. & $11.3$ & $11.4$ & $87.5$ & $12.5$ \\
Rela.+FT & $73.1$ & $65.6$ & $24.4$ & $75.6$ \\
\bottomrule
\end{tabular}
	\label{tab:MLCIFARDefense1-100}
 \vspace{-1mm}
\end{table}

Tables \ref{tab:MLCIFARDefense2-50} and \ref{tab:MLCIFARDefense2-100} present the defense results using the second countermeasure: removing the corresponding original data of the detected duplicates and unlearning the requested data, under the adopted and perfect de-duplication strategies, respectively. The results of all three methods - retraining, Fisher forgetting, and relabeling (including its fine-tuning variant) - exhibit a strong resemblance to the corresponding outcomes from the first countermeasure. This is because both countermeasures aim to address the influence of duplicates and their associated originals during the unlearning process. By either unlearning the duplicates and their originals (first countermeasure) or removing the originals while unlearning the requested data (second countermeasure), the processes effectively target overlapping information embedded within these datasets. As a result, the model undergoes a similar degree of influence reduction in both scenarios, leading to comparable effects on model performance. 

\begin{table}[!ht]\scriptsize
\vspace{-2mm}
	\centering
	\caption{Machine unlearning results on CIFAR10 using the second countermeasure: removing the corresponding original data of the detected duplicates and unlearning the requested data, under the adopted de-duplication.}
 \vspace{-0mm}
\begin{tabular} {ccccc} 
\toprule\rowcolor{gray!40}
  & Mode. fide. & Test accu. & Unle. effi. & Acc. on $\mathcal{D}_u$\\
 \midrule
Before unlearn  & $77.2$ & $69.1$ & $15.9$ & $84.1$ \\\hline \rowcolor{gray!40}
Retrain  & $79.5$ & $68.8$ & $24.1$ & $75.9$ \\
FF & $77.1$ & $69.0$ & $16.1$ & $83.9$ \\\rowcolor{gray!40}
Rela. & $13.9$ & $13.7$ & $86.4$ & $13.6$ \\
Rela.+FT & $72.9$ & $65.4$ & $24.7$ & $75.3$ \\
\bottomrule
\end{tabular}
	\label{tab:MLCIFARDefense2-50}
 \vspace{-2mm}
\end{table}

\begin{table}[!ht]\scriptsize
\vspace{-2mm}
	\centering
	\caption{Machine unlearning results on CIFAR10 using the second countermeasure under the perfect de-duplication.}
 \vspace{-0mm}
\begin{tabular} {ccccc} 
\toprule\rowcolor{gray!40}
  & Mode. fide. & Test accu. & Unle. effi. & Acc. on $\mathcal{D}_u$\\
 \midrule
Before unlearn  & $77.0$ & $69.1$ & $16.0$ & $84.0$ \\\hline \rowcolor{gray!40}
Retrain  & $80.3$ & $67.8$ & $30.6$ & $69.4$ \\
FF & $76.9$ & $69.0$ & $16.1$ & $83.9$ \\\rowcolor{gray!40}
Rela. & $13.8$ & $13.7$ & $86.4$ & $13.6$ \\
Rela.+FT & $77.0$ & $68.3$ & $22.4$ & $77.6$ \\
\bottomrule
\end{tabular}
	\label{tab:MLCIFARDefense2-100}
 \vspace{-1mm}
\end{table}

\vspace{2mm}
\noindent\textbf{Federated Unlearning.} Tables \ref{tab:FLCIFARDefense1-50} and \ref{tab:FLCIFARDefense1-100} present the results of applying the first countermeasure: unlearning both the requested model updates and their detected duplicates under the adopted and perfect de-duplication strategies, respectively. The results reveal that under both de-duplication strategies, the gradient ascent method severely degrades the global model's performance after unlearning, despite achieving high unlearning efficacy. This outcome occurs because unlearning both the requested model updates and their detected duplicates  erases a significant portion of the global model’s learned knowledge, rendering the model unable to retain its utility.

An interesting phenomenon is that the gradient ascent method performs even worse under the perfect de-duplication strategy (Table \ref{tab:FLCIFARDefense1-100}) compared to the adopted de-duplication strategy (Table \ref{tab:FLCIFARDefense1-50}). Specifically, the more duplicates detected and unlearned, the worse the unlearning outcomes. This can be attributed to the fact that the perfect de-duplication strategy identifies and removes all duplicate updates, including their original counterparts. Consequently, the unlearning process targets a larger portion of the model's training history, leading to over-unlearning. By aggressively reversing the influence of these updates, the gradient ascent method essentially forgets the foundational knowledge embedded in the global model, exacerbating its degradation.

In contrast, the retraining-from-scratch method successfully preserves the utility of the global model under both de-duplication strategies. This is attributable to the inherent characteristics of the retraining approach, which automatically removes the requested model updates, discards the existing global model, and constructs a new global model entirely from scratch. This process ensures that the influence of duplicated updates is eliminated without compromising the overall functionality of the retrained model.

\begin{table}[!ht]\scriptsize
\vspace{-2mm}
	\centering
	\caption{Federated unlearning results on CIFAR10 using the first countermeasure under the adopted de-duplication.}
 \vspace{-0mm}
\begin{tabular} {ccccc} 
\toprule\rowcolor{gray!40}
  & Mode. fide. & Test accu. & Unle. effi. & Acc. on $\mathcal{D}_u$\\
 \midrule
Before unlearn  & $65.75$ & $62.77$ & $47.44$ & $52.56$ \\\hline \rowcolor{gray!40}
Retrain  & $59.12$ & $60.56$ & $53.45$ & $46.55$ \\ 
GA & $26.03$ & $34.89$ & $93.85$ & $6.15$ \\
\bottomrule
\end{tabular}
	\label{tab:FLCIFARDefense1-50}
 \vspace{-3mm}
\end{table}

\begin{table}[!ht]\scriptsize
\vspace{-2mm}
	\centering
	\caption{Federated unlearning results on CIFAR10 using the first countermeasure under the perfect de-duplication.}
 \vspace{-0mm}
\begin{tabular} {ccccc} 
\toprule\rowcolor{gray!40}
  & Mode. fide. & Test accu. & Unle. effi. & Acc. on $\mathcal{D}_u$\\
 \midrule
Before unlearn  & $65.75$ & $62.77$ & $47.44$ & $52.56$ \\\hline \rowcolor{gray!40}
Retrain  & $58.45$ & $60.04$ & $53.98$ & $46.02$ \\ 
GA & $22.24$ & $32.23$ & $96.21$ & $3.79$ \\
\bottomrule
\end{tabular}
	\label{tab:FLCIFARDefense1-100}
 \vspace{-2mm}
\end{table}

Similar results are observed when employing the second countermeasure: removing the corresponding original model updates of the detected duplicates and unlearning the requested updates (Tables \ref{tab:FLCIFARDefense2-50} and \ref{tab:FLCIFARDefense2-100}). This is due to the fact that both countermeasures result in the removal of similar portions of the model updates. The second countermeasure additionally ensures that the original updates associated with detected duplicates are also erased, which mirrors the cumulative effect of unlearning applied in the first countermeasure.

\begin{table}[!ht]\scriptsize
\vspace{-2mm}
	\centering
	\caption{Federated unlearning results on CIFAR10 using the second countermeasure under the adopted de-duplication.}
 \vspace{-0mm}
\begin{tabular} {ccccc} 
\toprule\rowcolor{gray!40}
  & Mode. fide. & Test accu. & Unle. effi. & Acc. on $\mathcal{D}_u$\\
 \midrule
Before unlearn  & $65.75$ & $62.77$ & $47.44$ & $52.56$ \\\hline \rowcolor{gray!40}
Retrain  & $61.24$ & $60.02$ & $54.89$ & $45.11$ \\ 
GA & $23.98$ & $33.02$ & $91.98$ & $8.02$ \\
\bottomrule
\end{tabular}
	\label{tab:FLCIFARDefense2-50}
 \vspace{-3mm}
\end{table}

\begin{table}[!ht]\scriptsize
\vspace{-3mm}
	\centering
	\caption{Federated unlearning results on CIFAR10 using the second countermeasure under the perfect de-duplication.}
 \vspace{-0mm}
\begin{tabular} {ccccc} 
\toprule\rowcolor{gray!40}
  & Mode. fide. & Test accu. & Unle. effi. & Acc. on $\mathcal{D}_u$\\
 \midrule
Before unlearn  & $65.75$ & $62.77$ &  $47.44$ & $52.56$ \\\hline \rowcolor{gray!40}
Retrain  & $60.58$ & $59.65$ & $55.44$ & $44.56$ \\
GA & $21.87$ & $30.41$ & $94.52$ & $5.48$ \\
\bottomrule
\end{tabular}
	\label{tab:FLCIFARDefense2-100}
 \vspace{-2mm}
\end{table}

\vspace{2mm}
\noindent\textbf{Summary.} The adversary can achieve their objectives in both machine unlearning and federated unlearning in most scenarios, even when defense mechanisms and de-duplication techniques are applied. The retraining approach effectively resists the adversary in machine unlearning under perfect de-duplication and in federated unlearning, but this comes at the significant cost of retraining the entire model from scratch.



\subsection{Reinforcement Unlearning Study}
\noindent\textbf{Reinforcement Unlearning with DQN.} Given the significant differences between reinforcement unlearning and conventional machine unlearning, their evaluation metrics also take on distinct meanings. In reinforcement unlearning, model fidelity refers to the average reward that the agent receives in the remaining environments, while test performance denotes the average reward the agent achieves in a new testing environment. Unlearning efficacy quantifies the average reward that the agent attains in the unlearning environment, whereas unlearning impact represents the average reward received in the duplicated victim environment.

Table \ref{tab:RLDQN} presents the reinforcement unlearning outcomes in grid world using the DQN algorithm. Initially, the agent demonstrates commendable performance across all metrics. However, upon engaging in the unlearning process, several interesting phenomena emerge.
When the agent undergoes retraining from scratch, its performance remains largely unchanged compared to its pre-unlearning state. Notably, it continues to achieve a high reward in the unlearning environment, suggesting retained memory of the unlearning environment's features and indicating an unsuccessful unlearning attempt. This persistence in performance can be attributed to the agent's tendency to preserve the knowledge previously learned from the duplicated environment. 
For the other two unlearning methods, decremental RL-based and poisoning-based, they induce the agent's performance to decline in the unlearning environment, causing the agent to forget key features of the environment. Also, even when the duplicated victim environment deviates slightly from the unlearning environment, both decremental RL-based and poisoning-based methods still yield deteriorating performance in the unlearning environment, albeit exhibiting improved performance in the duplicated victim environment compared to the complete duplication scenario. This performance improvement may be attributed to the differences introduced in the near-duplicated environment. 
These results demonstrate that the adversary can achieve their objective under the decremental RL-based and poisoning-based methods, causing the agent to perform poorly in the victim environment, $\mathcal{M}_V$. However, the adversary’s objective is less achievable under the retraining-from-scratch approach. Despite this, retraining from scratch leads to high agent performance in the adversary's environment, $\mathcal{M}_A$, giving the adversary confidence to challenge the model owner by claiming that $\mathcal{M}_A$ was not properly unlearned.

\begin{table}[!ht]\scriptsize
	\centering
    \vspace{-2mm}
	\caption{Reinforcement unlearning results with DQN.}
\begin{tabular} {ccccc} 
\toprule\rowcolor{gray!40}
  & Model fidelity & \makecell[c]{Test\\performance} & \makecell[c]{Unlearn\\efficacy} & \makecell[c]{Unlearn\\impact} \\
 \midrule
Before unlearn  & $53.91$ & $48.35$ & $53.42$ & $53.55$ \\\hline \rowcolor{gray!40}
Retrain & $52.45$ & $44.13$ & $52.05$ & $51.53$ \\
Decremental RL+CD & $52.72$ & $43.36$ & $41.38$ & $42.21$ \\ \rowcolor{gray!40}
Decremental RL+ND & $53.66$ & $43.16$ & $43.62$ & $47.55$ \\
Poisoning+CD & $52.63$ & $44.95$ & $40.25$ & $41.45$ \\ \rowcolor{gray!40}
Poisoning+ND & $53.51$ & $45.32$ & $42.14$ & $48.67$ \\
\bottomrule
\end{tabular}
\vspace{-2mm}
	\label{tab:RLDQN}
\end{table}

\vspace{2mm}
\noindent\textbf{Reinforcement Unlearning with DDPG.} To modify the model architecture, we transition from using the DQN algorithm to the DDPG algorithm. The results are presented in Table \ref{tab:RLDDPG}. Generally, the trend observed is similar to the outcomes obtained with the DQN algorithm, as shown in Table \ref{tab:RLDQN}. However, there is a slight discrepancy in performance between the two algorithms. Specifically, the overall performance achieved using DDPG is slightly inferior to that achieved with DQN. This discrepancy may be attributed to differences in how the two algorithms handle the reinforcement learning task. For instance, the DDPG algorithm relies on a deterministic policy, which may lead to suboptimal exploration compared to the stochastic policy employed by DQN. 

\begin{table}[!ht]\scriptsize
	\centering
    \vspace{-2mm}
	\caption{Reinforcement unlearning results with DDPG.}
\begin{tabular} {ccccc} 
\toprule\rowcolor{gray!40}
  & Model fidelity & \makecell[c]{Test\\performance} & \makecell[c]{Unlearn\\efficacy} & \makecell[c]{Unlearn\\impact} \\
 \midrule
Before unlearn  & $49.47$ & $45.32$ & $48.94$ & $48.48$ \\\hline \rowcolor{gray!40}
Retrain & $48.88$ & $42.62$ & $45.42$ & $45.15$ \\
Decremental RL+CD & $48.26$ & $39.88$ & $41.44$ & $41.24$ \\\rowcolor{gray!40}
Decremental RL+ND & $49.01$ & $41.56$ & $42.45$ & $46.23$ \\
Poisoning+CD & $48.12$ & $39.15$ & $41.12$ & $41.52$ \\\rowcolor{gray!40}
Poisoning+ND & $48.97$ & $42.34$ & $42.45$ & $45.67$ \\
\bottomrule
\end{tabular}
\vspace{-2mm}
	\label{tab:RLDDPG}
\end{table}

\vspace{2mm}
\noindent\textbf{Reinforcement Unlearning with Two Environments Unlearned.} The results of unlearning two environments are depicted in Table \ref{tab:RL2Environments}. Notably, the overall performance trend remains consistent with that observed when unlearning a single environment. This similarity in performance across different numbers of environments may be attributed to the agent's ability to adapt and generalize across environments, allowing it to retain learned knowledge and strategies, even when faced with multiple instances of unlearning.

\begin{table}[!ht]\scriptsize
	\centering
    \vspace{-2mm}
	\caption{Reinforcement unlearning results with two unlearning environments.}
\begin{tabular} {ccccc} 
\toprule\rowcolor{gray!40}
  & Model fidelity & \makecell[c]{Test\\performance} & \makecell[c]{Unlearn\\efficacy} & \makecell[c]{Unlearn\\impact} \\
 \midrule
Before unlearn  & $53.24$ & $48.25$ & $52.62$ & $52.88$ \\\hline\rowcolor{gray!40}
Retrain & $52.23$ & $44.97$ & $52.31$ & $52.21$ \\
Decremental RL+CD & $52.14$ & $43.16$ & $42.34$ & $42.34$ \\\rowcolor{gray!40}
Decremental RL+ND & $53.21$ & $44.32$ & $43.57$ & $48.56$ \\
Poisoning+CD & $52.52$ & $43.25$ & $41.25$ & $42.15$ \\\rowcolor{gray!40}
Poisoning+ND & $53.22$ & $44.51$ & $43.24$ & $48.76$ \\
\bottomrule
\end{tabular}
\vspace{-2mm}
	\label{tab:RL2Environments}
\end{table}

\vspace{2mm}
\noindent\textbf{Reinforcement Unlearning with Defense.} Tables \ref{tab:RLDQNDefense1} and \ref{tab:RLDDPGDefense1} present the unlearning results using DQN and DDPG, respectively, under the first countermeasure: unlearning both the requested environment and the detected duplicate with the perfect de-duplication strategy. Under both the decremental RL-based and poisoning-based unlearning methods, the adversary successfully achieves their objective, causing the agent to perform poorly in the victim environment (i.e., low unlearning impact), even with the countermeasure applied.
The reason for this phenomenon is interesting. The first countermeasure unlearns both the requested environment and the detected duplicate. Since the detected duplicate is the victim environment itself, the countermeasure unintentionally assists the adversary by ensuring that the unlearning process negatively affects the agent’s performance in the victim environment. This aligns with the goals of the decremental RL-based and poisoning-based methods, which are designed to degrade the agent’s performance in the unlearning environment.

By contrast, the retraining-from-scratch method proves effective in resisting the adversary when DQN is used to train the agent (Table \ref{tab:RLDQNDefense1}). This is because retraining starts the learning process anew, and the victim environment is included in the retraining process, allowing the agent to regain high performance in the victim environment. However, when DDPG is used to train the agent, the retraining approach fails to resist the adversary (Table \ref{tab:RLDDPGDefense1}). This failure arises because DDPG’s training dynamics involve continuous action spaces and policy optimization, making it more challenging to fully recover the agent's capabilities in the victim environment. Additionally, the high complexity of DDPG’s parameter space may limit the agent's ability to fully regain the nuanced knowledge required for optimal performance in the victim environment. However, it is important to note that when using DQN, the agent also performs well in the unlearning environment. While this demonstrates the model's generalizability, it may inadvertently give the adversary confidence to challenge the success of unlearning, as the agent's retained performance could be interpreted as a failure to fully unlearn the environment.

\begin{table}[!ht]\scriptsize
	\centering
    \vspace{-2mm}
	\caption{Reinforcement unlearning results with DQN using the first countermeasure: unlearning both the requested environment and the detected duplicate under the perfect de-duplication strategy.}
\begin{tabular} {ccccc} 
\toprule\rowcolor{gray!40}
  & Model fidelity & \makecell[c]{Test\\performance} & \makecell[c]{Unlearn\\efficacy} & \makecell[c]{Unlearn\\impact} \\
 \midrule
Before unlearn  & $53.91$ & $48.35$ & $53.42$ & $53.55$ \\\hline \rowcolor{gray!40}
Retrain & $52.44$ & $44.32$ & $51.05$ & $51.54$\\ 
Decremental RL & $51.85$ & $42.92$ & $42.98$ & $43.12$ \\ \rowcolor{gray!40}
Poisoning & $50.23$ & $43.11$ & $42.35$ & $42.11$ \\
\bottomrule
\end{tabular}
\vspace{-2mm}
	\label{tab:RLDQNDefense1}
\end{table}

\begin{table}[!ht]\scriptsize
	\centering
    \vspace{-3mm}
	\caption{Reinforcement unlearning results with DDPG using the first countermeasure under the perfect de-duplication strategy.}
\begin{tabular} {ccccc} 
\toprule\rowcolor{gray!40}
  & Model fidelity & \makecell[c]{Test\\performance} & \makecell[c]{Unlearn\\efficacy} & \makecell[c]{Unlearn\\impact} \\
 \midrule
Before unlearn  & $53.91$ & $48.35$ & $53.42$ & $53.55$ \\\hline \rowcolor{gray!40}
Retrain & $48.45$ & $44.50$ & $43.25$ & $42.50$ \\ 
Decremental RL & $47.91$ & $40.98$ & $40.67$ & $40.78$ \\ \rowcolor{gray!40} 
Poisoning & $47.42$ & $40.32$ & $39.78$ & $40.09$ \\
\bottomrule
\end{tabular}
\vspace{-2mm}
	\label{tab:RLDDPGDefense1}
\end{table}

Tables \ref{tab:RLDQNDefense2} and \ref{tab:RLDDPGDefense2}, which evaluate the second countermeasure: removing the victim environment and unlearning the requested environment, show results similar to those observed with the first countermeasure. Under both the decremental RL-based and poisoning-based unlearning methods, the outcomes remain largely unaffected by whether the victim environment is removed. This is because these methods directly target the unlearning environment and aim to degrade the agent's performance within it. Consequently, the inclusion or exclusion of the victim environment has little impact on their effectiveness.

For the retraining-from-scratch method, removing the victim environment excludes it from the retraining process. However, when DQN is used to train the agent, the retrained model demonstrates strong generalizability, allowing it to perform well in the victim environment despite its absence during retraining. This behavior is akin to the training versus testing data scenario in standard machine learning, where a well-trained model can generalize effectively to unseen data. By contrast, the generalizability is less pronounced with DDPG due to its reliance on continuous action spaces and policy optimization, which require nuanced representations of the training environments for effective performance.

\begin{table}[!ht]\scriptsize
	\centering
    \vspace{-3mm}
	\caption{Reinforcement unlearning results with DQN using the second countermeasure: removing the victim environment and unlearning the requested environment under the perfect de-duplication strategy.}
\begin{tabular} {ccccc} 
\toprule\rowcolor{gray!40}
  & Model fidelity & \makecell[c]{Test\\performance} & \makecell[c]{Unlearn\\efficacy} & \makecell[c]{Unlearn\\impact} \\
 \midrule
Before unlearn  & $53.91$ & $48.35$ & $53.42$ & $53.55$ \\\hline \rowcolor{gray!40}
Retrain & $51.89$ & $43.88$ & $50.65$ & $51.22$ \\ 
Decremental RL & $52.24$ & $43.42$ & $43.55$ & $43.25$ \\ \rowcolor{gray!40} 
Poisoning & $51.49$ & $43.55$ & $42.85$ & $42.62$ \\ 
\bottomrule
\end{tabular}
\vspace{-2mm}
	\label{tab:RLDQNDefense2}
\end{table}

\begin{table}[!ht]\scriptsize
	\centering
    \vspace{-2mm}
	\caption{Reinforcement unlearning results with DDPG using the second countermeasure under the perfect de-duplication strategy.}
\begin{tabular} {ccccc} 
\toprule\rowcolor{gray!40}
  & Model fidelity & \makecell[c]{Test\\performance} & \makecell[c]{Unlearn\\efficacy} & \makecell[c]{Unlearn\\impact} \\
 \midrule
Before unlearn  & $53.91$ & $48.35$ & $53.42$ & $53.55$ \\\hline \rowcolor{gray!40}
Retrain & $48.02$ & $44.12$ & $42.92$ & $41.32$ \\ 
Decremental RL & $48.42$ & $41.53$ & $41.15$ & $41.33$ \\ \rowcolor{gray!40} 
Poisoning & $47.95$ & $41.04$ & $40.01$ & $41.10$ \\ 
\bottomrule
\end{tabular}
\vspace{-3mm}
	\label{tab:RLDDPGDefense2}
\end{table}

\vspace{1mm}
\noindent\textbf{Summary.} In reinforcement unlearning, the adversary effectively achieves their goal of degrading the agent’s performance in the victim environment using both the decremental RL-based and poisoning-based methods, even when countermeasures are applied. The adversary’s goal is less achievable under the retraining-from-scratch approach, especially when DQN is used to train the agent. However, the agent continues to perform well in the adversary’s environment, enabling the adversary to challenge the model owner’s unlearning success.

\vspace{-2mm}
\section{Related Work}
\vspace{-2mm}
\noindent\textbf{Data Duplication and De-duplication.} 
Current studies primarily center on two objectives: understanding the impact of data duplication on model performance and addressing security concerns.
For instance, Lee et al. \cite{Lee22ACL} revealed that applying de-duplication to the training datasets of language models significantly reduces the risk of generating memorized text and requires fewer training steps to achieve comparable model accuracy. Similarly, Carlini et al. \cite{Carlini21USENIX} and Kandpal et al. \cite{Kandpal22ICML} found that data duplication can assist adversaries in generating sequences from a trained model and identifying which sequences are memorized from the training set. 
Moreover, duplication can also be employed to undermine model security. Rakin et al. \cite{Rakin21} utilized adversarial weight duplication to inject specific DNN weight packages during data transmission, aiming to hijack the DNN function of the victim tenant in a cloud computing scenario.
However, existing research primarily focuses on the learning process of models, with the unlearning process often overlooked.

\vspace{1mm}
\noindent\textbf{Machine Unlearning.} 
Current research primarily delves into the unlearning process and the verification of unlearning results \cite{Xu23}. For example, Bourtoule et al. \cite{Bourtoule21} proposed SISA (Sharded, Isolated, Sliced, and Aggregated) training by randomly partitioning the training set into multiple shards and training a constituent model for each shard. In the event of an unlearning request, the model provider only needs to retrain the corresponding shard model. Subsequently, Warnecke et al. \cite{Warnecke23} shifted the focus of unlearning research from removing samples to removing features and labels, employing the concept of influence functions in their approach. Recently, Thudi et al. \cite{Thudi23} argued that unlearning cannot be proven solely by training the model on the unlearned data, as done in \cite{Ginart19,Guo20}, but should only be defined at the level of the algorithms used for learning and unlearning.

Recent research has also shed light on vulnerabilities in machine unlearning, elucidating critical trends such as over-unlearning and privacy leakage. Over-unlearning occurs when the unlearning process inadvertently removes more information than intended \cite{Hu2024NDSS}, potentially leading to degraded model performance or unintended data loss. Conversely, privacy leakage \cite{Hu2024SP} poses another formidable challenge, wherein adversaries can reconstruct sensitive information from unlearned data by analyzing discrepancies between the output of the original model and the unlearned model.

Investigations into federated unlearning and reinforcement unlearning have revealed distinct features. For instance, federated unlearning focuses more on gradient-level unlearning and can be conducted to remove an entire client \cite{Wu22Network}. 
Reinforcement unlearning is carried out at the object-level, aiming to remove the influence of an entire environment \cite{Ye25NDSS}.

However, the issue of data duplication has not been considered during the unlearning process. This paper fills this gap by exploring the impact of duplicate data on machine unlearning, providing novel insights into the overlooked aspects of model refinement and adaptation after training.

\vspace{1mm}
\noindent\textbf{Data Poisoning Attacks.} Adversarially injecting duplicate data shares similarities with data poisoning attacks \cite{Shan22USENIX,Marchant22AAAI}, particularly in the aspect of modifying the training dataset. Nonetheless, these two techniques differ significantly. The primary objective of data duplication is to introduce redundancy of existing data instances into the training set, either intentionally or unintentionally. This redundancy, by itself, is not considered a poison. In some scenarios, duplicate data is intentionally introduced to improve model performance. For instance, in image data augmentation, strategies such as rotation and scaling are often employed to generate additional images that are near-duplicates of their original counterparts. In contrast, data poisoning aims to manipulate the training set by strategically injecting malicious samples, often with the intent of deceiving the model during training. 

\vspace{-3mm}
\section{Conclusion}
\vspace{-1mm}
This paper has provided a comprehensive exploration into the impact of data duplication on machine unlearning processes, spanning conventional, federated, and reinforcement learning. Our findings underscore the importance of considering duplication in unlearning methodologies, revealing instances where standard approaches may fail and highlighting the need for tailored strategies to mitigate the effects of duplication on model performance and privacy.
Future research directions include delving deeper into potential countermeasures and the development of advanced de-duplication techniques.

\vspace{-2mm}
\section*{Acknowledgments}
\vspace{-1mm}
This work is partially supported by the ARC projects LP220200808 and DP230100246. This work is also funded by the European Health and Digital Executive Agency (HADEA) within the project ``Understanding the individual host response against Hepatitis D Virus to develop a personalized approach for the management of hepatitis D'' (DSolve, grant agreement number 101057917) and the BMBF with the project ``Repräsentative, synthetische Gesundheitsdaten mit starken Privatsphärengarantien'' (PriSyn, 16KISAO29K).

\vspace{-2mm}
\section*{Ethics Considerations}
\vspace{-1mm}
By highlighting the challenges and complexities associated with data duplication in machine unlearning, our research contributes to the development of more secure and robust machine learning models. We aim to enhance the awareness of potential vulnerabilities that can arise from improper handling of duplicated data, especially in scenarios involving adversarial manipulation. Also, in line with ethical research practices, we have chosen not to publish any specific data embeddings that could potentially be misused for malicious purposes. 

\vspace{-2mm}
\section*{Open Science Statement}
\vspace{-1mm}
We reproduce the state-of-the-art baselines by utilizing their official repositories on GitHub to ensure the validity of our comparisons. We will release our code, data duplication techniques, and experimental setups to facilitate further research in machine unlearning and to enable other researchers to validate our findings, build upon our work, and develop more effective unlearning and de-duplication strategies.




\vspace{-2mm}
\bibliographystyle{plain}
\bibliography{references}


\vspace{-4mm}
\section*{Appendix}
\vspace{-2mm}
\setcounter{section}{0}
\renewcommand{\appendixname}{Appendix~\Alph{section}}
\section{Proof of Lemma \ref{lem:state distribution} and Theorem \ref{thm:reward}}
\vspace{-2mm}

\begin{proof}[Proof of Lemma \ref{lem:state distribution}]
\begin{equation}\nonumber
\begin{aligned}
    &d^{\mathcal{T}',\pi}_{\mu}-d^{\mathcal{T},\pi}_{\mu}\\
    &=\gamma\int_{\mathcal{S}}d^{\mathcal{T}',\pi}_{\mu}(s')\mathcal{T}'^{\pi}(s|s')\mathrm{d}s'-\gamma\int_{\mathcal{S}}d^{\mathcal{T},\pi}_{\mu}(s')\mathcal{T}^{\pi}(s|s')\mathrm{d}s'\\
    &=\gamma\int_{\mathcal{S}}d^{\mathcal{T}',\pi}_{\mu}(s')\mathcal{T}'^{\pi}(s|s')\mathrm{d}s'-\gamma\int_{\mathcal{S}}d^{\mathcal{T},\pi}_{\mu}(s')\mathcal{T}^{\pi}(s|s')\mathrm{d}s'\\
    &+\gamma\int_{\mathcal{S}}d^{\mathcal{T}',\pi}_{\mu}(s')\mathcal{T}^{\pi}(s|s')\mathrm{d}s'-\gamma\int_{\mathcal{S}}d^{\mathcal{T}',\pi}_{\mu}(s')\mathcal{T}^{\pi}(s|s')\mathrm{d}s'\\
    &=\gamma\int_{\mathcal{S}}d^{\mathcal{T}',\pi}_{\mu}(s')(\mathcal{T}'^{\pi}(s|s')-\mathcal{T}^{\pi}(s|s'))\mathrm{d}s'\\
    &+\gamma\int_{\mathcal{S}}(d^{\mathcal{T}',\pi}_{\mu}(s')-d^{\mathcal{T},\pi}_{\mu}(s'))\mathcal{T}^{\pi}(s|s')\mathrm{d}s'.
\end{aligned}
\end{equation}
Then, we have:
\begin{equation}\nonumber
\begin{aligned}
    &||d^{\mathcal{T}',\pi}_{\mu}-d^{\mathcal{T},\pi}_{\mu}||_1\\
    &\leq\gamma\int_{\mathcal{S}}|\int_{\mathcal{S}}d^{\mathcal{T}',\pi}_{\mu}(s')(\mathcal{T}'^{\pi}(s|s')-\mathcal{T}^{\pi}(s|s'))\mathrm{d}s'|\mathrm{d}s\\
    &+\gamma\int_{\mathcal{S}}|\int_{\mathcal{S}}(d^{\mathcal{T}',\pi}_{\mu}(s')-d^{\mathcal{T},\pi}_{\mu}(s'))\mathcal{T}^{\pi}(s|s')\mathrm{d}s'|\mathrm{d}s\\
    &\leq\gamma\int_{\mathcal{S}}d^{\mathcal{T}',\pi}_{\mu}(s')\int_{\mathcal{S}}|\mathcal{T}'^{\pi}(s|s')-\mathcal{T}^{\pi}(s|s')|\mathrm{d}s\mathrm{d}s'\\
    &+\gamma\int_{\mathcal{S}}|d^{\mathcal{T}',\pi}_{\mu}(s')-d^{\mathcal{T},\pi}_{\mu}(s')|\int_{\mathcal{S}}\mathcal{T}^{\pi}(s|s')\mathrm{d}s\mathrm{d}s'\\
    &\leq\gamma\mathbb{E}_{s\sim d^{\mathcal{T},\pi}_{\mu}}||\mathcal{T}'^{\pi}(\cdot|s)-\mathcal{T}^{\pi}(\cdot|s)||_1+\gamma||d^{\mathcal{T}',\pi}_{\mu}-d^{\mathcal{T},\pi}_{\mu}||_1\\
    &\Rightarrow||d^{\mathcal{T}',\pi}_{\mu}-d^{\mathcal{T},\pi}_{\mu}||_1\leq\frac{\gamma}{1-\gamma}\mathbb{E}_{s\sim d^{\mathcal{T},\pi}_{\mu}}||\mathcal{T}'^{\pi}(\cdot|s)-\mathcal{T}^{\pi}(\cdot|s)||_1,
\end{aligned}
\end{equation}
where the first inequality is established by applying the definition of the norm; the second inequality is derived through the subadditivity property of the norm; and the third inequality is attained by noting that $\int_{\mathcal{S}}\mathcal{T}^{\pi}(s|s')\mathrm{d}s=1$.
\end{proof}

\begin{proof}[Proof of Theorem \ref{thm:reward}]
\begin{equation}\nonumber
\begin{aligned}
    &J^{\mathcal{T}',\pi}_{\mu}-J^{\mathcal{T},\pi}_{\mu}=\frac{1}{1-\gamma}\int_{\mathcal{S}}d^{\mathcal{T}',\pi}_{\mu}(s)\int_{\mathcal{A}}\pi(a|s)r(s,a){\rm d}a{\rm d}s\\
    &-\frac{1}{1-\gamma}\int_{\mathcal{S}}d^{\mathcal{T},\pi}_{\mu}(s)\int_{\mathcal{A}}\pi(a|s)r(s,a){\rm d}a{\rm d}s\\
    &=\frac{1}{1-\gamma}\int_{\mathcal{S}}(d^{\mathcal{T}',\pi}_{\mu}(s)-d^{\mathcal{T},\pi}_{\mu}(s))\int_{\mathcal{A}}\pi(a|s)r(s,a){\rm d}a{\rm d}s\\
    &\leq\frac{1}{1-\gamma}\int_{\mathcal{S}}|d^{\mathcal{T}',\pi}_{\mu}(s)-d^{\mathcal{T},\pi}_{\mu}(s)|\int_{\mathcal{A}}\pi(a|s)r(s,a){\rm d}a{\rm d}s\\
    &\leq\frac{1}{1-\gamma}||d^{\mathcal{T}',\pi}_{\mu}-d^{\mathcal{T},\pi}_{\mu}||_1\max_s\mathbb{E}_{a\sim\mathcal{A}}r(s,a)\\
    &\leq\frac{\gamma}{(1-\gamma)^2}\mathbb{E}_{s\sim d^{\mathcal{T},\pi}_{\mu}}||\mathcal{T}'^{\pi}(\cdot|s)-\mathcal{T}^{\pi}(\cdot|s)||_1\max_s\mathbb{E}_{a\sim\mathcal{A}}r(s,a),
\end{aligned}
\end{equation}
where the first inequality is obtained by leveraging the properties of integrals; the second inequality is derived by noting that $\mathbb{E}_{a\sim\mathcal{A}}r(s,a)=\int_{\mathcal{A}}\pi(a|s)r(s,a){\rm d}a$; and the third inequality is established by applying the conclusion of Lemma \ref{lem:state distribution}.
\end{proof}

\end{document}